\theoremstyle{plain}
\theoremstyle{plain}
\newtheorem{lemma}{Lemma}[section]
\theoremstyle{plain}
\newtheorem{assumption}{Assumption}[section]
\theoremstyle{plain}
\theoremstyle{plain}
\newtheorem{corollary}{Corollary}[section]
\theoremstyle{remark}
\theoremstyle{comment}
\theoremstyle{definition}
\theoremstyle{definition}
\newtheorem{example}{Example}[section]
\newcommand{\plim}{\operatornamewithlimits{plim\,}}
\newcommand{\E}{\mathbb{E}}
\mathchardef\mhyphen="2D
\def\y{y_1^n}
\def\x{x_1^n}
\def \t{\theta}
\def \E{\mathbb{E}}
\def \dt {\mathrm{d}}
\newcommand{\blind}{0}
\begin{document}

\def\spacingset#1{\renewcommand{\baselinestretch}%
{#1}\small\normalsize} \spacingset{1}

%%%%%%%%%%%%%%%%%%%%%%%%%%%%%%%%%%%%%%%%%%%%%%%%%%%%%%%%%%%%%%%%%%%%%%%%%%%%%%

\if0\blind
{
  \title{\bf Variational Bayes in State Space Models: Inferential and Predictive
  	Accuracy}
  \author{David T. Frazier, Rub\'en Loaiza-Maya and Gael M. Martin\thanks{
    The authors gratefully acknowledge funding from the Australian Research Council.}\hspace{.2cm}\\
    Department of Econometrics and Business Statistics, Monash University}
  \maketitle
} \fi

\if1\blind
{
  \bigskip
  \bigskip
  \bigskip
  \begin{center}
    {\LARGE\bf Title}
\end{center}
  \medskip
} \fi

\bigskip
\begin{abstract}
Using theoretical and numerical\ results, we document the accuracy of
commonly applied variational Bayes methods across a range of state space
models. The results demonstrate that, in terms of accuracy on fixed
parameters, there is a clear hierarchy in terms of the methods, with
approaches that do not approximate the states yielding superior accuracy
over methods that do. {We} also document numerically{\ }that{\ }the
inferential discrepancies between the various methods often yield only small
discrepancies in predictive accuracy over small out-of-sample evaluation
periods. Nevertheless, in certain settings, these predictive discrepancies
can become meaningful over a longer out-of-sample period. This finding indicates
that the invariance of predictive results to inferential inaccuracy, which
has been an oft-touted point made by practitioners seeking to justify the
use of variational inference, is not ubiquitous and must be assessed on a
case-by-case basis.
\end{abstract}

\noindent%
{\it Keywords:}  State space models; Variational inference; Probabilistic forecasting; Bayesian consistency; Scoring rules.

%\newpage
\spacingset{1.5} % DON'T change the spacing!
\section{Introduction}

%\section{Introduction}

A common class of models used for time series modelling and prediction is
the class of {state space models (SSMs). This class includes nonlinear
	structures, like stochastic volatility models, regime switching models,
	mixture models, and models with random dynamic jumps; plus linear
	structures, such as linear Gaussian unobserved component models. (See {%
		\citealp{durbin2001}, \citealp{harvey2004},} and {\citealp{giordani2011},}
	for extensive reviews).}

The key feature of SSMs is their dependence on hidden, {or latent,} `local'\
variables, or states, which govern the dependence of the observed{\ data, in
	conjunction} with a vector of unknown `global'\ parameters. {This feature
	leads to inferential challenges with, for example, the likelihood function {%
		for the global parameters }being }analytically unavailable, except in {special cases. Whilst frequentist methods have certainly been adopted (see {\citealp{Daniel1993}, \citealp{RUIZ1994}, \citealp{anderson1996}, %
		\citealp{gallant1996}, \citealp{SANDMANN1998}, \citealp{Bates2006}, %
		\citealp{AITSAHALIA2007},} and {\citealp{AITSAHALIA2020}}, amongst others),
	it is arguable that Bayesian Markov chain Monte Carlo (MCMC) methods have
	become the most {common tool} for analysing general SSMs, with such
	techniques expanded in more recent times to accommodate particle filtering,
	via pseudo-marginal variants such as particle MCMC (PMCMC) ({%
		\citealp{andrieu:doucet:holenstein:2010}}; {\citealp{Flury2011}). }See \cite%
	{giordani2011} and \cite{fearnhead2011} for a detailed coverage of this
	literature, including the variety of MCMC-based algorithms adopted therein.}

Whilst (P)MCMC methods have been transformative in the SSM field, {they do
	suffer from certain well-known limitations. }Most notably, they require that
either the {(complete) }likelihood function is\ available in closed form or
that an unbiased estimator of it is available. Such methods also do not
necessarily scale well to high-dimensional problems;{\ that is, to models
	with multiple observed and/or state processes. }If the{\ assumed data
	generating process (DGP)} is intractable, inference can proceed using {%
	approximate Bayesian computation (ABC) (%
	\citealp{dean:singh:jasra:peters:2011}; \citealp{CREEL2015}; %
	\citealp{Martin2019}), since ABC requires only simulation - not evaluation }%
- of the DGP. However, ABC also does not scale well to problems with a large
number of parameters (see, e.g., Corollary 1 in %
\citealp{frazier2018asymptotic} for details).

Variational Bayes (VB) methods (see {\citealp{blei2017variational}} for a
review) can be seen as a potential class of alternatives to either (P)MCMC-
or ABC-based inference in SSMs. In particular, and in contrast to these
methods, VB scales well to high-dimensional problems, using
optimization-based techniques to effectively manage a large number of
unknowns (\citealp{tran2017variational}; \citealp{quiroz2018gaussian}; %
\citealp{koop2018variational}; \citealp{chan2020fast}; %
\citealp{loaiza2020fast}).

{In this paper}, we make three contributions to the literature on {the
	application of VB to SSMs}. The first contribution is to {highlight the
	fundamental issue that lies at the heart of the use of VB in an SSM setting,
	linking this to an existing issue identified }in the literature as the
`incidental parameter problem' (\citealp{Neyman1948}; %
\citealp{lancaster2000incidental}; \citealp{westling2019beyond}).{\ In
	brief, without due care, the application of VB to the local parameters in an
	SSM }leads to a lack of Bayesian consistency for the global parameters. {%
	Moreover}{, in a class of common SSMs, we {demonstrate analytically} the
	impact of this inconsistency on the resulting state inference, and show that
	even in idealized settings} inconsistent inference for {the global
	parameters can }lead to{\ highly inaccurate inferences about the local
	parameters. }The second contribution is to review some existing variational
methods, and to link their prospects for consistency to the manner in which
they do, or do not, circumvent th{e incidental parameter problem. Thirdly, }%
we {undertake a numerical} comparison of several competing variational
methods, in terms of {both }{inferential and predictive accuracy}. The key
findings are that: i) correct management of the local variables leads to
inferential accuracy that closely {matches that of exact (MCMC-based) Bayes;
	ii) }inadequate treatment of the local variables leads, in contrast, to
noticeably less accurate inference; iii) predictive accuracy {shows some
	robustness to} inferential inaccuracy, {but only for small sample sizes.
	Once the size of the sample is very large, the consistency (or otherwise) of
	a VB }method impinges on{\ predictive accuracy, }with a clear ranking
becoming evident across the methods for some {DGPs}; {with certain VB
	methods unable to produce similar out-of-sample accuracy results to exact
	Bayes {in some settings}. }

{We believe {that all three contributions serve as} novel }insights{\textbf{%
		\ }into the role of VB in SSMs, {which may} lead to best practice, if
	heeded. }

Throughout the remainder, we make use of the following notational
conventions. Generic $p,g$ are used to denote densities, and $\pi $ is used
to denote posteriors conditioned only on data, and where the conditioning is
made explicit depending on the situation. For any arbitrary collection of
data $(z_{1},\dots ,z_{n})$, we abbreviate this collection as $z_{1}^{n}$.
For a sequence $a_{n}$, the terms $O_{p}(a_{n})$ , $o_{p}(a_{n})$ and $%
\rightarrow _{p}$ have their usual meaning. Similarly, we let $%
\operatornamewithlimits{plim\,}_{n}X_{n}=c$ denote $X_{n}\rightarrow _{p}c$.
We let $d(\cdot ,\cdot )$ denote a metric on $\Theta \subseteq \mathbb{R}%
^{d_{\theta }}$. {The proofs of all theoretical results, certain
	definitions, {plus additional tables and figures}, are included in the
	Supplementary Appendix.}

\section{State space models: exact inference}

An SSM is a stochastic process consisting of the pair $\{(X_{t},Y_{t})\}$,
where $\{X_{t}\}$ is a Markov chain taking values in the measurable space $(%
\mathcal{X},\mathcal{F}_{X},\mu )$, and $\{Y_{t}\}$ is a process taking
values in a measure space $(\mathcal{Y},\mathcal{F}_{Y},\chi )$, such that,
conditional on $\{X_{t}\}$, the sequence $\{Y_{t}\}$ is independent. The
model is formulated through the following conditional and transition
densities: for a vector of unknown random parameters $\theta $ taking values
in the probability space $(\Theta ,\mathcal{F}_{\theta },P_{\theta })$,
where $P_{\theta }$ admits the density function $p_{\theta }$,%
\begin{eqnarray}
Y_{t}|X_{t},\theta &\sim &g_{\theta }(y_{t}|x_{t})  \label{eq:HMM_2} \\
X_{t+1}|X_{t},\theta &\sim &\chi _{\theta }(x_{t+1},x_{t}),  \label{eq:HMM_1}
\end{eqnarray}%
where $\chi _{\theta }(\cdot ,\cdot )$ denotes the transition kernel with
respect to the measure $\mu $. For simplicity, throughout the remainder we
disregard {the }terms dependence on the initial measure $\nu $ and the
invariant measure $\mu $, when no confusion will result. The order-one
Markov assumption for $X_{t}$ is innocuous, and any finite (and known)
Markov order can be accommodated via a redefinition of the state variables.

Given the independence of $Y_{t}$ conditional on $X_{t}$, and the Markovian
nature of $X_{t}|X_{t-1}$, the complete data likelihood is 
\begin{equation*}
p_{\theta }(y_{1}^{n},x_{1}^{n})=\nu (x_{1})g_{\theta
}(y_{1}|x_{1})\prod_{t=2}^{n}{{\chi _{\theta }(x_{t},x_{t-1})}}g_{\theta
}(y_{t}|x_{t}).
\end{equation*}%
The (average) observed data log-likelihood is thus 
\begin{equation}
\ell _{n}(\theta ):=\frac{1}{n}\log p_{\theta }(y_{1}^{n})=\frac{1}{n}\log
\int p_{\theta }(y_{1}^{n},x_{1}^{n})dx_{1}^{n},  \label{log_like}
\end{equation}%
and the maximum likelihood estimator (MLE) of $\theta $ is $\widehat{\theta }%
_{n}^{MLE}:=\operatornamewithlimits{argmax\,}_{\theta \in \Theta }\ell
_{n}(\theta )$. {As is standard knowledge,} $\ell _{n}(\theta )$ {is
	available in closed form only for particular forms of }$g_{\theta
}(y_{t}|x_{t})$ and $\chi _{\theta }(x_{t+1},x_{t});$ the canonical example
being when (\ref{eq:HMM_1}) and (\ref{eq:HMM_2}) define a linear Gaussian
state space model (LGSSM). Similarly, for $p(\theta )$ denoting the prior
density, the exact (marginal) posterior for $\theta $, defined as 
\begin{equation}
\pi (\theta |y_{1}^{n})=\int \pi (\theta ,x_{1}^{n}|y_{1}^{n})\dt x_{1}^{n},%
\text{ where }\pi (\theta ,x_{1}^{n}|y_{1}^{n})\propto
p(y_{1}^{n}|x_{1}^{n},\theta )p(x_{1}^{n}|\theta )p(\theta ),
\label{eq:posts}
\end{equation}%
{{is available (e.g. via straightforward MCMC methods) only in limited
		cases, the LGSSM being one such case. In more complex settings and/or
		settings where either }$\theta $ {or }$\{(X_{t},Y_{t})\}$, {or both, are
		high-dimensional, accessing (\ref{eq:posts}) can be difficult, with standard
		MCMC methods leading to slow mixing, and thus potentially unreliable
		inferences (\citealp{betancourt:2018}).}}

To circumvent these issues, recent research has suggested the use of
variational methods for SSMs: these methods can be used to approximate
either the log-likelihood function in (\ref{log_like}) or the marginal
posterior in (\ref{eq:posts}), depending on the mode of inference being
adopted. The focus of this paper, as {already highlighted}, is on
variational \textit{Bayes} and, in particular, on the accuracy of such
methods in SSMs. However, as part of the following section we also
demonstrate the asymptotic behaviour of frequentist variational {point
	estimators of }$\theta $, as this result will ultimately help us interpret
the behavior of the variational posterior in SSMs.

\section{State space models: variational inference\label{vi}}

\subsection{Overview}

{The idea of VB is }to produce an approximation to the joint posterior $\pi
(x_{1}^{n},\theta |y_{1}^{n})$ in \eqref{eq:posts} by searching over a given
family of distributions for the member that minimizes a user-chosen
divergence measure between the posterior of interest and the family. This {%
	replaces} the posterior sampling problem {with} one of optimization over the
family of densities used to implement the approximation. We now review the
use of variational methods in SSMs, paying particular attention to the
Markovian nature of the states.

{VB} approximates the posterior $\pi (x_{1}^{n},\theta |y_{1}^{n})$ by
minimizing the KL divergence between a family of densities {$\mathcal{Q}$},
with generic element $q(x_{1}^{n},\theta )$, and $\pi $: 
\begin{equation}
\text{KL}(q||\pi )=\int q(x_{1}^{n},\theta )\log \frac{q(x_{1}^{n},\theta )}{%
	\pi (x_{1}^{n},\theta |y_{1}^{n})}\dt x_{1}^{n}\dt\theta .  \label{kl}
\end{equation}%
Optimizing the KL divergence directly is not feasible {since it depends} on
the unknown $\pi (x_{1}^{n},\theta |y_{1}^{n})$; the very quantity we are
trying to approximate. However, minimizing the KL divergence between $q$ and 
$\pi $ is equivalent to maximizing the {so-called }variational evidence
lower bound (ELBO):%
\begin{equation}
\text{ELBO}(q||\pi ):=\int q(x_{1}^{n},\theta )\log \frac{%
	p(y_{1}^{n}|x_{1}^{n},\theta )p(x_{1}^{n}|\theta )p(\theta )}{%
	q(x_{1}^{n},\theta )}\dt x_{1}^{n}\dt\theta \text{,}  \label{elbo}
\end{equation}%
which we can access. {Hence}, for a given class $\mathcal{Q}$, we may define
the variational approximation as 
\begin{equation*}
\widehat{q}:=\operatornamewithlimits{argmax\,}_{q\in \mathcal{Q}}\text{ELBO}%
(q||\pi ).
\end{equation*}%
The standard approach to obtaining $\widehat{q}$ is to consider a class of
product distributions 
\begin{equation*}
\mathcal{Q}=\{q:q(x_{1}^{n},\theta )=q_{\theta }(\theta
)q_{x}(x_{1}^{n}|\theta )\},
\end{equation*}%
with $Q$ often restricted to be mean-field, i.e., $\theta _{i}$ independent
of $\theta _{j}$, $i\neq j$, and $x_{1}^{n}$ independent of $\theta $.

Regardless of the variational family adopted, $\text{KL}(q||\pi )$, and
hence $\text{ELBO}(q||\pi )$, involve both $\theta $ and $x_{1}^{n}$. {The
	product form of }{$\mathcal{Q}$}{\ allows us to write:} 
\begin{eqnarray*}
	\text{ELBO}(q||\pi ) &=&\int_{\Theta }\int_{\mathcal{X}}q_{\theta }(\theta
	)q_{x}(x_{1}^{n}|\theta )\log \frac{p(y_{1}^{n}|x_{1}^{n},\theta
		)p(x_{1}^{n}|\theta )p(\theta )}{q_{\theta }(\theta )q_{x}(x_{1}^{n}|\theta )%
	}\dt x_{1}^{n}\dt\theta \\
	&=&\int_{\Theta }\int_{\mathcal{X}}q_{\theta }(\theta
	)q_{x}(x_{1}^{n}|\theta )\log \left[ \frac{p(y_{1}^{n}|x_{1}^{n},\theta
		)p(x_{1}^{n}|\theta )}{q_{x}(x_{1}^{n}|\theta )}\right] \dt x_{1}^{n}\dt%
	\theta -\text{KL}[q_{\theta }(\theta )||p(\theta )],
\end{eqnarray*}%
{where the last line follows from Fubini's theorem and the fact that }$%
q_{x}(x_{1}^{n}|\theta )$, {by assumption, is a proper density function, for
	all }$\theta .$ {Further, defining} 
\begin{equation}
\mathcal{L}_{n}(\theta ):=\int_{\mathcal{X}}q_{x}(x_{1}^{n}|\theta )\log 
\frac{p(y_{1}^{n}|x_{1}^{n},\theta )p(x_{1}^{n}|\theta )}{%
	q_{x}(x_{1}^{n}|\theta )}\dt x_{1}^{n},  \label{l_n}
\end{equation}%
by Jensen's inequality 
\begin{flalign*}
\log p_{\theta }(y_{1}^{n}) =\log \int_{\mathcal{X}} p_{\theta
}(y_{1}^{n},x_{1}^{n})dx_{1}^{n}=\log \int_{\mathcal{X}}q_{x}(x_{1}^{n}|\theta )\left\{ \frac{%
	p(y_{1}^{n}|x_{1}^{n},\theta )p(x_{1}^{n}|\theta )}{q_{x}(x_{1}^{n}|\theta )}%
\right\} \dt x_{1}^{n}\geq \mathcal{L}_{n}(\theta ).
\end{flalign*}Thus $\mathcal{L}_{n}(\theta )$ can be viewed as an
approximation (from below) to the observed data log-likelihood. Defining 
\begin{equation}
\Upsilon _{n}(q):=\int_{\Theta }\{\log p_{\theta }(y_{1}^{n})-\mathcal{L}%
_{n}(\theta )\}q_{\theta }(\theta )\dt\theta ,  \label{Jen}
\end{equation}%
the $\text{ELBO}(q||\pi )$ can then be expressed as 
\begin{equation}
\text{ELBO}(q||\pi )=\int_{\Theta }\log p_{\theta }(y_{1}^{n})q_{\theta
}(\theta )\dt\theta -\Upsilon _{n}(q)-\text{KL}[q_{\theta }(\theta
)||p(\theta )].  \label{KL_decomp}
\end{equation}%
This representation {decomposes }$\text{ELBO}(q||\pi )${\ into three
	components}, two of which only depend on the variational approximation of
the global parameters $\theta $, and {a third component, }$\Upsilon _{n}(q)$%
, {that} \cite{yang2020alpha} refer to {as the average (with respect to }$%
q_{\theta }(\theta )$){\ \textquotedblleft Jensen's gap\textquotedblright ,
	which} encapsulates the error introduced by approximating the latent states
using a given variational class. While the first and last term in the
decomposition can easily be controlled by choosing an appropriate class for $%
q_{\theta }(\theta )$, it is the average Jensen's gap that ultimately
determines the behavior of the variational approximation.

\subsection{Consistency of variational point estimators}

The decomposition in \eqref{KL_decomp} has specific implications for
variational inference in SSMs, which can be most readily seen by first
considering the case where we only employ a variational approximation for
the states, and consider point estimation of the parameters $\theta $. In
this case, we can think of the variational family as $\mathcal{Q}%
:=\{q:q(\theta ,x_{1}^{n})=\delta _{\theta }\times q_{x}(x_{1}^{n}|\theta
)\} $, where $\delta _{\theta }$ is the Dirac delta function at $\theta $,
and we can then write 
\begin{equation*}
\frac{1}{n}\text{ELBO}(\theta \times q_{x}||\pi )=\ell _{n}(\theta )-\frac{1%
}{n}\Upsilon _{n}(\theta ,q_{x})+\frac{1}{n}\log p(\theta ),
\end{equation*}%
where we abuse notation and represent functions with arguments $\delta
_{\theta }$ only by the parameter value $\theta \in \Theta $, and also make
use of the short-hand notation $q_{x}$ for $q_{x}(x_{1}^{n}|\theta ).$
Define the variational point estimator as 
\begin{equation*}
(\widehat{\theta }_{n},\widehat{q}_{x}):=\operatornamewithlimits{argmax\,}%
_{\theta \in \Theta ,\mathcal{Q}}\frac{1}{n}\text{ELBO}(\theta \times
q_{x}||\pi ).
\end{equation*}

At a minimum, we would hope that the variational estimator $\widehat{\theta }%
_{n}$ converges to the same point as the MLE. To deduce the behavior of $%
\widehat{\theta }_{n}$, we employ the following high-level regularity
conditions.

\begin{assumption}
	\label{ass:mle} (i) The parameter space $\Theta $ is compact, and $%
	0<p(\theta)<\infty$. (ii) There exists a deterministic function $H(\theta )$%
	, continuous for all $\theta \in \Theta $, and such that $\sup_{\theta \in
		\Theta }|H(\theta )-\ell _{n}(\theta )|=o_{p}(1)$. (iii) For some value $%
	\theta_0\in\Theta$, for all $\epsilon>0$, there exists a $\delta>0$ such
	that $H(\theta _{0})\ge \sup_{d(\theta,\theta_0)>\delta}H(\theta)+\delta$.
\end{assumption}

Low level regularity conditions that imply Assumption \ref{ass:mle} are
given in \cite{douc2011consistency}. Since the main thrust{\ of this paper
	is to deduce the accuracy of }variational methods{\ in SSMs, and not }to
focus on{\ the technical details of the SSMs in particular, }we make use of
high-level conditions to simplify the exposition{\textit{\ }and reduce
	necessary technicalities that may otherwise obfuscate the main point.}

The following result shows that consistency of $\widehat{\theta }_{n}$ (for $%
\theta _{0}$) is guaranteed if the variational family for the states is
`good enough'.

\begin{lemma}
	\label{lem:first} Define $\kappa_n:=\frac{1}{n}\inf_{q_x\in\mathcal{Q}%
		_x}\Upsilon_n(\theta_0,q_x)$, and note that $\kappa_n\ge0$. If Assumption %
	\ref{ass:mle} is satisfied, and if $\kappa_n=o_p(1)$, then $\widehat{\theta}%
	_{n}\rightarrow_p\theta_0$.
\end{lemma}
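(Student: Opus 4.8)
The plan is to cast $\widehat{\theta}_n$ as the maximizer of a profiled extremum criterion and then run a standard argmax-consistency argument, the twist being that the average Jensen's gap $\Upsilon_n$ need only be controlled on \emph{one side} globally and at the single point $\theta_0$. Since the maximization over $\theta$ and $q_x$ in the definition of $(\widehat{\theta}_n,\widehat{q}_x)$ is joint, I would first observe that the $\theta$-component of the optimizer coincides with the maximizer of the profiled objective $\bar{M}_n(\theta):=\ell_n(\theta)+\frac1n\log p(\theta)-\frac1n\inf_{q_x\in\mathcal{Q}_x}\Upsilon_n(\theta,q_x)$, because for fixed $\theta$ maximizing over $q_x$ amounts to minimizing the (nonnegative) gap. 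Two facts would be recorded at the outset: (a) $\Upsilon_n(\theta,q_x)\ge 0$ for every $\theta$ and every $q_x$, which is exactly the Jensen inequality displayed before \eqref{Jen} (equivalently, nonnegativity of the conditional KL divergence $\mathrm{KL}[q_x\,\|\,p(x_1^n|y_1^n,\theta)]$), so that $\bar{M}_n(\theta)\le \ell_n(\theta)+\frac1n\log p(\theta)$ uniformly in $\theta$; and (b) by the definition of $\kappa_n$, $\bar{M}_n(\theta_0)=\ell_n(\theta_0)+\frac1n\log p(\theta_0)-\kappa_n$ exactly.

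Next I would chain these through the optimality of $\widehat{\theta}_n$: using the global upper bound (a) on the left and the exact value (b) on the right, $\ell_n(\widehat{\theta}_n)+\frac1n\log p(\widehat{\theta}_n)\ge \bar{M}_n(\widehat{\theta}_n)\ge \bar{M}_n(\theta_0)=\ell_n(\theta_0)+\frac1n\log p(\theta_0)-\kappa_n$. Under Assumption~\ref{ass:mle}(i) the prior enters only through $\frac1n[\log p(\theta_0)-\log p(\widehat{\theta}_n)]=o_p(1)$, and $\kappa_n=o_p(1)$ by hypothesis, so the displayed inequality collapses to $\ell_n(\widehat{\theta}_n)\ge \ell_n(\theta_0)-o_p(1)$. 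Invoking the uniform convergence of Assumption~\ref{ass:mle}(ii) to replace $\ell_n$ by its deterministic limit $H$ at both $\widehat{\theta}_n$ and $\theta_0$ then yields $H(\widehat{\theta}_n)\ge H(\theta_0)-o_p(1)$.

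Finally I would close the argument with the well-separation condition of Assumption~\ref{ass:mle}(iii): for any $\delta>0$, on the event $\{d(\widehat{\theta}_n,\theta_0)>\delta\}$ one has $H(\widehat{\theta}_n)\le H(\theta_0)-\delta$, which, combined with $H(\widehat{\theta}_n)\ge H(\theta_0)-o_p(1)$, forces that event to have probability tending to zero. As $\delta$ is arbitrary, this delivers $\widehat{\theta}_n\rightarrow_p\theta_0$.

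I expect the main obstacle to be conceptual rather than computational: one cannot, and need not, establish uniform convergence of the correction term $\frac1n\Upsilon_n(\theta,q_x)$ over $\Theta$ --- indeed, a central message of the paper is that this gap can be badly behaved away from the truth. The argument succeeds precisely because the gap carries a definite sign: global nonnegativity gives a clean upper bound on the attained objective, while the hypothesis $\kappa_n=o_p(1)$ pins the gap down only at $\theta_0$, and this asymmetric, one-sided control is exactly what the argmax inequality requires. A secondary, purely technical point is the treatment of the prior term, for which it suffices to read Assumption~\ref{ass:mle}(i) as implying that $\log p$ is bounded on the compact set $\Theta$, so that its $n^{-1}$-scaled contribution is $o_p(1)$.
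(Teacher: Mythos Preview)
Your proposal is correct and follows essentially the same route as the paper's proof: both arguments exploit the one-sided control afforded by the global nonnegativity of Jensen's gap (your $\hat\kappa_n\ge0$) together with the hypothesis $\kappa_n=o_p(1)$ at $\theta_0$, chain through the optimality of $\widehat\theta_n$ to obtain $\ell_n(\widehat\theta_n)\ge\ell_n(\theta_0)-o_p(1)$, and then close with uniform convergence and well-separation. Your explicit introduction of the profiled criterion $\bar M_n$ makes the logic slightly more transparent, but the underlying decomposition and the sequence of inequalities are the same as in the paper.
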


The above result demonstrates that for the variational point estimator $%
\widehat{\theta }_{n}$ to be consistent, the (infeasible) average Jensen's
gap must converge to zero. Intuitively, this requires that the error
introduced by approximating the states grows more slowly than the rate at
which information accumulates in our observed sample, i.e., $n$. The
condition $\kappa _{n}=o_{p}(1)$ is stated at the true value, $\theta _{0}$,
rather than at the estimated value, as it will often be easier to deduce
satisfaction of the condition, or otherwise, at convenient points in the
parameter space.

As the following example illustrates, even in the simplest SSMs, the scaled
(average) Jensen's gap need not vanish in the limit, and can ultimately
pollute the resulting inference on $\theta _{0}$.

\begin{example}[Linear Gaussian model]
	\label{ex:one} Consider the following {SSM}, 
	\begin{flalign*}
	X_{t+1} =\rho X_{t}+\sigma_{0} \epsilon_{t}, \quad X_{1} \sim \mathcal{N}%
	\left(0, \sigma_0^2\right),\quad 
	Y_{t}=\alpha X_{t}+\sigma_{0} \eta_{t},
	\end{flalign*}with\ $\{\epsilon _{t}\}$ and $\{\eta _{t}\}$ independent
	sequences of i.i.d. standard normal random variables. We observe a sequence $%
	\{Y_{t}\}$ from the above model, but the states $\{X_{t}\}$ are unobserved.
	Furthermore, consider that $\theta =(\rho ,\alpha )^{\prime }$ {is unknown }%
	while $\sigma _{0}$ is known.
	
	We make use of the autoregressive nature of the state process to approximate
	the posterior for $\pi (x_{1}^{n}|\theta ,y_{1}^{n})$ via the variational
	family: $\mathcal{Q}_{x}:=\left\{ \lambda \in \lbrack 0,1):q_{\lambda
	}(x_{1}^{n}|\sigma _{0})=\mathcal{N}[x_{1}^{n};0,\nu (\lambda )\Phi
	_{n}(\lambda )]\right\} $, where $\nu (\lambda )=\frac{\sigma _{0}^{2}}{%
		(1-\lambda ^{2})}$ and, 
	\begin{equation*}
	\Phi _{n}(\lambda )=\left( 
	\begin{array}{ccccc}
	1 & \lambda & \lambda ^{2} & \cdots & \lambda ^{n-1} \\ 
	\lambda & 1 & \lambda & \cdots & \lambda ^{n-2} \\ 
	\lambda ^{2} & \lambda & 1 & \cdots & \lambda ^{n-3} \\ 
	\vdots & \vdots & \vdots & \ddots & \vdots \\ 
	\lambda ^{n-1} & \lambda ^{n-2} & \lambda ^{n-3} & \cdots & 1%
	\end{array}%
	\right)
	\end{equation*}%
	When evaluated at $\lambda =\rho _{0}$, $Q_{x}$ is the actual (infeasible)
	joint distribution of the states, and thus should provide a reasonable
	approximation to the state posterior.
	
	\begin{lemma}
		\label{lem:lgm} Let $\sigma_0>0$ and $0\le|\rho_0|<1$, $0\le|\alpha_0|<M$.
		Assume the variational parameter defining $\mathcal{Q}_x$ is fixed at $%
		\lambda=\rho_0$. (i) If $\rho_0=0$ and known, then the variational point
		estimator $\widehat{\alpha}$ is consistent if and only if $\alpha_0=0$. (ii)
		If $\alpha _{0}=0$ and known, then the variational point estimator for $%
		\widehat{\rho }$ is consistent if and only if $\rho _{0}=0$.
	\end{lemma}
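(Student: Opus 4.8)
The plan is to lean on the fully Gaussian structure of the example. With the variational law frozen at $q_{\rho_0}=\mathcal{N}[x_1^n;0,\nu(\rho_0)\Phi_n(\rho_0)]$, every factor entering the ELBO is Gaussian, so the criterion $\tfrac{1}{n}\mathcal{L}_n(\theta)$ reduces to an explicit quadratic functional of the states whose $q_{\rho_0}$-expectation is determined entirely by the two moments $\E_{q_{\rho_0}}[X_t]=0$ and $\Cov_{q_{\rho_0}}(X_t,X_s)=\nu(\rho_0)\rho_0^{|t-s|}$. Using the representation in \eqref{KL_decomp}, the frozen-$q_x$ point estimator maximizes $\tfrac{1}{n}\mathcal{L}_n(\theta)=\ell_n(\theta)-\tfrac{1}{n}\Upsilon_n(\theta,q_{\rho_0})+o_p(1)$, so I would substitute the Gaussian observation and transition densities, expand $(y_t-\alpha x_t)^2$ and $(x_t-\rho x_{t-1})^2$, and integrate term by term against $q_{\rho_0}$. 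Since the observation density depends on $\theta$ only through $\alpha$ and the transition density only through $\rho$, the closed-form criterion separates additively into an $\alpha$-block and a $\rho$-block, the only stochastic ingredient being $\tfrac{1}{n}\sum_t y_t^2$, whose limit comes from a law of large numbers under the true model. Consistency of each coordinate is then settled by locating the maximizer of the probability limit of the relevant block and checking whether it equals the true value.

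For part (i), fix $\rho=\rho_0=0$, so that $q_0=\mathcal{N}(0,\sigma_0^2 I)$. The $\alpha$-block equals $-\tfrac{1}{2n\sigma_0^2}\sum_t\E_{q_0}[(y_t-\alpha x_t)^2]$, and the decisive observation is that its cross term is proportional to $\E_{q_0}[X_t]=0$ and so vanishes identically, leaving $-\tfrac{1}{2n\sigma_0^2}\sum_t y_t^2-\tfrac12\alpha^2$, a strictly concave function maximized at $\widehat{\alpha}=0$ for every sample. Hence $\widehat{\alpha}\rightarrow_p 0$, and $\widehat{\alpha}\rightarrow_p\alpha_0$ if and only if $\alpha_0=0$. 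The structural reason is that the zero-mean variational family cannot encode the data-driven smoothed mean of the states that carries all of the information about $\alpha$; this is exactly a non-vanishing average Jensen's gap in the sense of Lemma~\ref{lem:first}.

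For part (ii), fix $\alpha=\alpha_0=0$ and estimate $\rho$. Because $\alpha_0=0$ decouples the states from the observations, $\ell_n$ carries no dependence on $\rho$, so the entire $\rho$-dependence of the criterion is routed through the prior block $\E_{q_{\rho_0}}[\log p(x_1^n\mid\rho)]$, which up to $\rho$-free terms equals $-\tfrac{1}{2\sigma_0^2}\sum_{t=2}^n\E_{q_{\rho_0}}[(X_t-\rho X_{t-1})^2]$; I would evaluate the inner expectation using the moments above and then study the maximizer of the probability limit of this block, reading off the ``if and only if'' by comparing it with $\rho_0$. I expect the main obstacle to lie precisely here rather than in the algebra. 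First, one must upgrade the pointwise calculation to a uniform-in-$\theta$ argmax statement over the compact $\Theta$, with a well-separated maximum in the spirit of Assumption~\ref{ass:mle}, to legitimately pass to $\plim\widehat{\rho}$. Second, and more delicately, since the likelihood is flat in $\rho$ when $\alpha_0=0$, the behaviour of $\widehat{\rho}$ is dictated solely by how the frozen variational law interacts with the prior term rather than by any information in the sample; pinning down that interaction --- in particular, whether recovery of $\rho_0$ is genuine or merely an artifact of having frozen $\lambda=\rho_0$, with the former occurring only in the degenerate case $\rho_0=0$ --- is the heart of the argument.
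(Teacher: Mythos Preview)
Your approach to part (i) is correct and more direct than the paper's. The paper does not locate the argmax of $\mathcal{L}_n(\alpha)$; instead it computes the observed-data log-likelihood $\log p_{\theta_0}(\y)$ in closed form (which requires evaluating the tridiagonal determinant $|\Omega_n(\theta_0)|$), forms the Jensen gap $\Upsilon_n(\theta_0,q_{\rho_0})/n$ explicitly, and then asserts that the variational estimator is consistent \emph{iff} this gap vanishes at $\theta_0$. Your route---reading off $\widehat\alpha=0$ directly from the vanishing cross term---is cleaner and bypasses the determinant calculation entirely, while reaching the same conclusion.

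For part (ii), however, your plan will not reproduce the paper's conclusion, and the difficulty is exactly where you suspect. Carrying your $\rho$-block computation through gives
\[
\E_{q_{\rho_0}}[(X_t-\rho X_{t-1})^2]=\nu(\rho_0)\bigl(1+\rho^2-2\rho\rho_0\bigr),
\]
a strictly convex quadratic in $\rho$ minimized at $\rho=\rho_0$; hence $\widehat\rho=\rho_0$ for every $n$ and every $\rho_0$---precisely the ``artifact'' recovery you flag. The paper does \emph{not} compute this argmax. It computes $\Upsilon_n(\theta_0,q_{\rho_0})/n$, shows it equals $-\tfrac12-\tfrac12\log(1-\rho_0^2)+\tfrac12(1-\rho_0^2)$, and then invokes the claim that consistency of the variational point estimator is \emph{equivalent} to this gap vanishing. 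But Lemma~\ref{lem:first} supplies only the forward direction (gap vanishes $\Rightarrow$ consistent), and in this degenerate setting---where $\alpha_0=0$ makes $\ell_n(\theta)$ flat in $\rho$, so that the well-separated-maximum condition in Assumption~\ref{ass:mle}(iii) fails for $\rho$---the converse is not available from the paper's machinery. Your direct argmax route and the paper's Jensen-gap route therefore disagree here; the discrepancy is not a defect in your outline but a genuine tension between the stated ``only if'' and what the frozen-$\lambda=\rho_0$ criterion actually delivers.
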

\end{example}

Lemma \ref{lem:lgm} demonstrates that even in {this} simplest of SSMs,
variational inference is {inconsistent in anything other than }the most
vacuous cases. {In short}, so long as there is weak dependence in states the
estimator of $\alpha $ is inconsistent; alternatively, if there is no
relationship between $Y_{t}$ {and }$X_{t}$, i.e., $\alpha _{0}=0$, then the
only way in general to obtain consistent inference for $\rho _{0}$ is if $%
\rho _{0}=0$!

\subsection{Lack of Bayes consistency of the variational posterior\label%
	{vb_cons}}

While the above results pertain to variational point estimators of $\theta
_{0}$, a similar result can be stated in terms of the so-called `idealized'\
variational posterior. To state this result, we approximate the state
posterior using the class of variational approximations, 
\begin{equation*}
q_{x}(x_{1}^{n} ):=q_{\lambda }(x_{1}^{n}),
\end{equation*}%
where $\lambda \in \Lambda $ denotes the vector of {so-called `variational}
parameters' that characterize the elements in $\mathcal{Q}$. With reference
to (\ref{l_n}), making the dependence of $q_{\lambda }(x_{1}^{n})$ on the
variational parameter $\lambda $ explicit leads to the criterion $%
L_{n}(\theta ,\lambda )$, where $q_{x}(x_{1}^{n})$ in (\ref{l_n}) is
replaced by $q_{\lambda }(x_{1}^{n} )$. Optimizing over $\lambda $ for fixed 
$\theta $ yields the profiled criterion, 
\begin{equation}
\widehat{L}_{n}(\theta ):={\mathcal{L}}_{n}[\theta ,\widehat{\lambda }%
_{n}(\theta )]\equiv \sup_{\lambda \in \Lambda }\mathcal{L}_{n}(\theta
,\lambda ),  \label{l_hat}
\end{equation}%
and the `idealized'\ variational posterior for $\theta $, 
\begin{equation*}
\widehat{q}(\theta |y_{1}^{n})\propto \exp \left\{ \widehat{{L}}_{n}(\theta
)\right\} p(\theta ).
\end{equation*}%
We remark that, unlike with the frequentist optimization problem, the
idealized VB posterior incorporates {a component of }Jensen's gap directly
into the definition of that posterior. A sufficient condition for the `VB
ideal' to concentrate onto $\theta _{0}$ is that $\theta _{0}$ is the {maximum%
} of a well-defined limit counterpart to $\widehat{L}_{n}(\theta )$.
However, there is no reason to suspect this is the case a priori.

The `idealized' variational posterior $\widehat{q}(\theta |y_{1}^{n})$ is a
generalized posterior, in the sense of \cite{bissiri2016general}, based on
the profiled criterion function $\widehat{L}_{n}(\theta )$. Given that $%
\widehat{q}(\theta |y_{1}^{n})$ is constructed from a profiled criterion,
the `idealized' variational posterior is then related to the {frequentist
	profiled variational inference }approach described in \cite%
{westling2019beyond}. In their analysis, the authors view variational point
estimators of the global parameters $\theta $ as $M$-estimators based on the
profiled variational criterion function in \eqref{l_hat}. They then explore
conditions and examples under which the variational point estimator, based
on maximizing $\widehat{L}_{n}(\theta )$, do, or do not, deliver consistent
estimates of $\theta _{0}$.

While \cite{westling2019beyond} focus on consistency of variational point
estimators, we study concentration of the `idealized' posterior distribution 
$\widehat{q}(\theta |y_{1}^{n})$. The following result shows that, under
regularity conditions similar to those maintained in \cite%
{westling2019beyond}, the `idealized' variational posterior $\widehat{q}%
(\theta |y_{1}^{n})$ is Bayes consistent for some value that may or may not
coincide with $\theta _{0}$.

%In addition, note that $\widehat{\lambda }_{n}:%
%\mathcal{X}\times \Theta \mapsto \Lambda \subset \mathbb{R}^{d_{\lambda }}$.

\begin{assumption}
	\label{ass:post} (i) There exists a map $\theta\mapsto
	\lambda(\theta)\in\Lambda$ such that $\sup_{\theta\in\Theta}\|\widehat{%
		\lambda}_n(\theta)-\lambda(\theta)\|=o_p(1) $. (ii) There exist a
	deterministic function $\mathcal{L}:\Theta\times\Lambda\mapsto\mathbb{R}$
	and a $\theta_\star\in\Theta$ such that the following are satisfied: (a) for
	all $\epsilon>0$ there exists some $\delta>0$ such that $\inf_{\theta\in
		d(\theta,\theta_\star)>\epsilon}\left[{\mathcal{L}}(\theta,\lambda(\theta))-%
	\mathcal{L}(\theta_\star,\lambda(\theta_\star))\right]\le-\delta$; (b) $%
	\sup_{\theta\in\Theta,\lambda\in\Lambda}|{\mathcal{L}}_n(\theta,\lambda)/n-%
	\mathcal{L}(\theta,\lambda)|=o_p(1)$. (iii) For any $\epsilon >0$, $%
	\int_{\Theta }\mathbf{1}\left\{ \theta :\mathcal{L}(\theta ,\lambda )-%
	\mathcal{L}(\theta _{\star },\lambda (\theta _{\star }))<\epsilon \right\}
	p(\theta )\dt\theta >0$. (iv) For all $n$ large, $\int_{\Theta }\exp \left\{ 
	\widehat{L}_{n}(\theta )\right\} p(\theta )\dt\theta <\infty $.
\end{assumption}

\begin{lemma}
	\label{lem:second} Under Assumption \ref{ass:post}, for any $\epsilon>0$, $%
	\widehat{Q}\left(\{\theta\in\Theta:d(\theta,\theta_\star)>\epsilon\}|y_1^n
	\right)=o_p(1).$
\end{lemma}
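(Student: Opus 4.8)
The plan is to treat the idealized posterior $\widehat q(\theta\mid y_1^n)\propto\exp\{\widehat L_n(\theta)\}p(\theta)$ as a generalized (Gibbs-type) posterior and to run the classical numerator-over-denominator concentration argument, with the profiled criterion $\widehat L_n(\theta)/n$ playing the role of an average log-likelihood. Writing $A_\epsilon:=\{\theta\in\Theta:d(\theta,\theta_\star)>\epsilon\}$, $M(\theta):=\mathcal L(\theta,\lambda(\theta))$ and $M_\star:=M(\theta_\star)$, I would center by $\exp\{nM_\star\}$ and express
\[
\widehat Q(A_\epsilon\mid y_1^n)=\frac{\int_{A_\epsilon}\exp\{\widehat L_n(\theta)-nM_\star\}p(\theta)\,\dt\theta}{\int_{\Theta}\exp\{\widehat L_n(\theta)-nM_\star\}p(\theta)\,\dt\theta},
\]
the denominator being finite by Assumption \ref{ass:post}(iv). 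The goal is then to show the numerator is exponentially small while the denominator decays no faster than $e^{-o(n)}$.

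The crucial first step is a uniform law of large numbers for the profiled criterion: $\sup_{\theta\in\Theta}|\widehat L_n(\theta)/n-M(\theta)|=o_p(1)$. Because $\widehat L_n(\theta)=\sup_{\lambda\in\Lambda}\mathcal L_n(\theta,\lambda)$ and the map $f\mapsto\sup_\lambda f(\lambda)$ is $1$-Lipschitz in the uniform norm, the uniform convergence in Assumption \ref{ass:post}(ii)(b) immediately gives $\widehat L_n(\theta)/n=\sup_{\lambda\in\Lambda}\mathcal L(\theta,\lambda)+o_p(1)$ uniformly in $\theta$. I would then identify this supremum with $M(\theta)$: since $\widehat\lambda_n(\theta)$ maximizes $\mathcal L_n(\theta,\cdot)$ and, by Assumption \ref{ass:post}(i), $\sup_\theta\|\widehat\lambda_n(\theta)-\lambda(\theta)\|=o_p(1)$, the limit of $\mathcal L_n(\theta,\widehat\lambda_n(\theta))/n$ equals both $\sup_\lambda\mathcal L(\theta,\lambda)$ and $\mathcal L(\theta,\lambda(\theta))$, so the two coincide for every $\theta$.

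Given the uniform law, the numerator bound follows from the well-separation supplied by Assumption \ref{ass:post}(ii)(a): there is $\delta>0$ with $M(\theta)\le M_\star-\delta$ on $A_\epsilon$, so on an event of probability tending to one $\widehat L_n(\theta)-nM_\star\le -n\delta/2$ uniformly on $A_\epsilon$, whence the numerator is at most $e^{-n\delta/2}\int_{A_\epsilon}p(\theta)\,\dt\theta\le e^{-n\delta/2}$. For the denominator I would invoke the prior-mass condition Assumption \ref{ass:post}(iii): for each $\eta>0$ the set $B_\eta:=\{\theta:M(\theta)>M_\star-\eta\}$ has strictly positive prior probability $P(B_\eta):=\int_{B_\eta}p(\theta)\,\dt\theta$, and on $B_\eta$ the uniform law gives $\widehat L_n(\theta)-nM_\star\ge -2n\eta$ for $n$ large, so the denominator is at least $e^{-2n\eta}P(B_\eta)$. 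Combining, $\widehat Q(A_\epsilon\mid y_1^n)\le P(B_\eta)^{-1}e^{-n(\delta/2-2\eta)}$, which tends to zero once $\eta$ is chosen below $\delta/4$.

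The main obstacle I anticipate is the identification step inside the uniform law, namely converting the parameter-level closeness $\|\widehat\lambda_n(\theta)-\lambda(\theta)\|=o_p(1)$ into value-level closeness $\mathcal L(\theta,\widehat\lambda_n(\theta))\to\mathcal L(\theta,\lambda(\theta))$. This requires (uniform) continuity of $\lambda\mapsto\mathcal L(\theta,\lambda)$, which is not stated explicitly but is the natural regularity in force under conditions similar to those in \cite{westling2019beyond}; with it in hand, everything downstream is standard Laplace-type bookkeeping, and care is only needed to ensure each $o_p$ statement holds uniformly over $\Theta$ so that the exponential bounds apply simultaneously across the whole numerator and across $B_\eta$.
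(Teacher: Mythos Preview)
Your proposal is correct and follows essentially the same numerator/denominator concentration argument as the paper's proof. The only cosmetic differences are that you center by the deterministic $nM_\star$ whereas the paper centers by the random $\widehat L_n(\theta_\star)$ (plus a slack $n\delta$) and invokes Fatou for the denominator, and that you obtain uniform convergence of the profiled criterion via the $1$-Lipschitz property of $\sup_\lambda$ rather than a direct triangle-inequality expansion; the continuity-in-$\lambda$ caveat you flag is equally implicit in the paper's own passage from Assumption~\ref{ass:post}(i) to $\mathcal L[\theta,\widehat\lambda_n(\theta)]\to\mathcal L[\theta,\lambda(\theta)]$.
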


Assumption \ref{ass:post}(2.b) implies that $L_{n}(\theta ,\lambda )/n$
converges to $\mathcal{L}(\theta ,\lambda) $ (uniformly in $\theta $ and $%
\lambda $); while part (2.a) is an identification condition and states that $%
\mathcal{L}(\theta ,\lambda)${\ }is maximized at some $\theta _{\star }$,
which may differ from $\theta _{0}$. This identification {condition }{makes
	clear} that if $\theta _{\star }\neq \theta _{0}$, then %\begin{equation*}
$\mathcal{L}[\theta _{\star },\lambda (\theta _{\star })]>\mathcal{L}[\theta
_{0},\lambda (\theta _{0})]$ %\end{equation*}% 
and the idealized posterior for $\theta $ will not concentrate onto $\theta
_{0}$. This can be interpreted {explicitly }in terms of {Jensen's gap as
	defined in (\ref{Jen})} by recalling that under Assumption \ref{ass:mle}, $%
\ell _{n}(\theta _{0})\rightarrow _{p}H(\theta _{0})$, and by considering
the limit of (the scaled) Jensen's gap evaluated at $\theta _{0}$, 
\begin{equation*}
\operatornamewithlimits{plim\,}_{n\rightarrow \infty }\frac{1}{n}\Upsilon
_{n}\left( \theta _{0},q_{\widehat{\lambda }_{n}(\theta _{0})}\right)
=H(\theta _{0})-\mathcal{L}[\theta _{0},\lambda (\theta _{0})]\geq H(\theta
_{0})-\mathcal{L}[\theta _{\star },\lambda (\theta _{\star })]+\delta ,
\end{equation*}%
for some $\delta \geq 0$. If Assumption \ref{ass:post}(2.a) is satisfied at $%
\theta _{\star }\neq \theta _{0}$, then $\delta >0$, and $\kappa
_{n}:=\Upsilon _{n}(\theta _{0},q_{\widehat{\lambda }_{n}(\theta
	_{0})})/n\rightarrow _{p}C>0$.

Taken together, Lemmas \ref{lem:first} and \ref{lem:second} show that,
regardless of whether one conducts variational frequentist or Bayesian
inference in SSMs, consistent inference for $\theta _{0}$ will require that
a version of Jensen's gap converges to zero. Moreover, as Example \ref%
{ex:one} has demonstrated, this is not likely to occur even in simple SSMs.
The point is further exemplified in the follow example, where we explore the
Bayesian consistency of the idealized VB posterior in the same linear
Gaussian SSM.

\begin{example}[Linear Gaussian model revisited]
	Returning to the linear Gaussian SSM {in Example \ref{ex:one}}, let us {%
		again }consider the case where $\theta =(\alpha ,\rho )^{\prime }$ is {%
		unknown, while }$\sigma _{0}${\ is known}, and we consider variational
	inference for $\theta $ using the idealized variational posterior. Our
	variational family for the vector of states is again taken to be $\mathcal{Q}%
	_{x}$, which depends on the single variational parameter $\lambda $, and
	leads to a jointly Gaussian approximation with zero-mean and covariance
	matrix $\nu (\lambda )\Phi _{n}(\lambda )$ defined previously.
	
	When $\sigma^2_0=1$, and known, the limit criterion $\mathcal{L}%
	(\theta,\lambda)$ can be constructed analytically, and the mapping $%
	\theta\mapsto\lambda(\theta)$, obtained by maximizing $\mathcal{L}%
	(\theta,\lambda)$ with respect to $\lambda$ for fixed $\theta$, calculated.
	For fixed $\theta$, with $\rho\ne0$, the mapping $\theta\mapsto\lambda(%
	\theta)$ is given by (see the proof of Lemma \ref{lem:three} for details): 
	\begin{equation*}
	\lambda(\theta)=(\alpha^2 + \rho^2 - ((\alpha^2 + \rho^2 + \rho +
	1)(\alpha^2 + \rho^2 - \rho + 1))^{1/2} + 1)/\rho.
	\end{equation*}
	
	In order for the idealized variational posterior $\widehat{q}(\theta
	|y_{1}^{n})$ to concentrate onto $\theta _{0}$, we require that the limit
	maximizer of $\mathcal{L}[\theta ,\lambda (\theta )]$ coincide with $\theta
	_{0}$ (see Lemma \ref{lem:second}). The following result demonstrates that {%
		this }does not occur in general.
	
	\begin{lemma}
		\label{lem:three} Assume that $\rho \in \lbrack \underline{\rho },\overline{%
			\rho }]$, for some known $\underline{\rho }>0$, but close to zero, and $%
		\underline\rho<\overline\rho<1$, and $\alpha \in \lbrack 0,\overline{\alpha }%
		]$, for some $\overline{\alpha }>0$. Under the variational family $\mathcal{Q%
		}_{x}$, we have 
		\begin{equation*}
		\theta _{\star }=(\underline{\rho },0)^{\prime }=%
		\operatornamewithlimits{argmax\,}_{\theta \in \lbrack \underline{\rho },%
			\overline{\rho }]\times \lbrack 0,\overline{\alpha }]}\mathcal{L}[\theta
		,\lambda (\theta )].
		\end{equation*}%
		Hence, if $\rho _{0}>\underline{\rho }$, or $\alpha _{0}>0$, then $\widehat{Q%
		}(\{\theta:d(\theta ,\theta _{0})>0\}|y_{1}^{n})>0$ with probability
		converging to one.
	\end{lemma}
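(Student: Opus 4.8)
The plan is to reduce everything to Lemma~\ref{lem:second}. That result says the idealized posterior concentrates at the maximizer $\theta_\star$ of the limiting profiled criterion $\mathcal{L}[\theta,\lambda(\theta)]$, provided Assumption~\ref{ass:post} holds. Hence it suffices to (a) produce $\mathcal{L}(\theta,\lambda)=\plim_n \mathcal{L}_n(\theta,\lambda)/n$ in closed form, (b) profile out $\lambda$ and maximize over the box $[\underline\rho,\overline\rho]\times[0,\overline\alpha]$, showing the maximizer is $\theta_\star=(\underline\rho,0)$, and (c) verify the regularity in Assumption~\ref{ass:post}. Granting these, whenever $\rho_0>\underline\rho$ or $\alpha_0>0$ we have $\theta_\star\neq\theta_0$; by Lemma~\ref{lem:second} the mass on any fixed ball about $\theta_\star$ tends to one, and a small enough such ball excludes $\theta_0$, so by the triangle inequality $\widehat{Q}(\{\theta:d(\theta,\theta_0)>0\}\mid y_1^n)\rightarrow_p 1$, which yields the stated conclusion.

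For step (a) I would expand the integrand in \eqref{l_n} with $q_x=q_\lambda$ into three pieces: the expected observation log-density, the expected state log-prior, and the entropy of $q_\lambda$. Under $q_\lambda$ the states are $\mathcal{N}(0,\nu(\lambda)\Phi_n(\lambda))$ with mean zero, so each piece reduces to a log-determinant plus the expectation of a Gaussian quadratic form, i.e.\ a trace $\mathrm{tr}(M\,\nu(\lambda)\Phi_n(\lambda))$, where $M$ is either $\alpha^2 I$ (from the observation equation) or the tridiagonal state-precision matrix with symbol $1+\rho^2-2\rho\cos\omega$. The data enter only through $n^{-1}\sum_t y_t^2$, whose probability limit is a constant free of $(\theta,\lambda)$. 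Since $\Phi_n(\lambda)$ and the state precision are asymptotically Toeplitz, the Szeg\H{o}/Grenander--Szeg\H{o} limit theorems give that $n^{-1}\log\det(\cdot)$ and $n^{-1}\mathrm{tr}(\cdot)$ converge to spectral integrals over $\omega\in[-\pi,\pi]$, while the non-stationary initialization $X_1\sim\mathcal{N}(0,\sigma_0^2)$ perturbs only corner entries and hence contributes $O(1/n)$. Evaluating these with the standard identities $\frac{1}{2\pi}\int(1+\lambda^2-2\lambda\cos\omega)^{-1}\dt\omega=(1-\lambda^2)^{-1}$, $\frac{1}{2\pi}\int\cos\omega\,(1+\lambda^2-2\lambda\cos\omega)^{-1}\dt\omega=\lambda(1-\lambda^2)^{-1}$ and $\frac{1}{2\pi}\int\log(1+\lambda^2-2\lambda\cos\omega)\dt\omega=0$ collapses $\mathcal{L}(\theta,\lambda)$ to a constant (absorbing the data-dependent limit) minus a rational function of $\lambda$ whose numerator is affine in $\lambda$ and quadratic in $(\rho,\alpha)$. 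This spectral-limit bookkeeping is the main obstacle; the remainder is algebra.

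For step (b), setting the $\lambda$-derivative of that rational function to zero yields a quadratic in $\lambda$ whose admissible root ($|\lambda|<1$) is precisely the maximizer $\lambda(\theta)$ displayed before the lemma. Substituting $\lambda(\theta)$ back and simplifying with the critical-point identity gives the profiled criterion
\[ \mathcal{L}[\theta,\lambda(\theta)] = c - \tfrac14\Big[(\alpha^2+\rho^2+1)+\sqrt{(\alpha^2+\rho^2+\rho+1)(\alpha^2+\rho^2-\rho+1)}\,\Big], \]
with $c$ independent of $\theta$. Maximizing this is the same as minimizing the bracket $\Psi(\rho,\alpha)$. Writing $B=\alpha^2+\rho^2+1$, the bracket equals $B+\sqrt{B^2-\rho^2}$, which is increasing in $B$ and hence in $\alpha\ge 0$ for fixed $\rho$, so the minimizing $\alpha$ over $[0,\overline\alpha]$ is $\alpha_\star=0$. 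At $\alpha=0$ the bracket becomes $(\rho^2+1)+\sqrt{\rho^4+\rho^2+1}$, whose derivative $2\rho+(2\rho^3+\rho)/\sqrt{\rho^4+\rho^2+1}$ is strictly positive for $\rho>0$, so the minimizing $\rho$ over $[\underline\rho,\overline\rho]$ is $\rho_\star=\underline\rho$. Hence $\theta_\star=(\underline\rho,0)$. The hypothesis $\underline\rho>0$ is used twice: it keeps $\lambda(\theta)$ well defined (the map carries a $1/\rho$ factor) and pins $\theta_\star$ strictly away from the origin.

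For step (c), the explicit forms make Assumption~\ref{ass:post} routine: the uniform convergences $\sup_{\theta,\lambda}|\mathcal{L}_n/n-\mathcal{L}|=o_p(1)$ and $\sup_\theta\|\widehat{\lambda}_n(\theta)-\lambda(\theta)\|=o_p(1)$ follow from uniformity of the Toeplitz limits on the compact box together with smoothness and uniqueness of the inner maximizer $\lambda(\theta)$; the separation condition in part (ii) of Assumption~\ref{ass:post} is exactly the strict monotonicity established above; and the prior-positivity and integrability conditions hold because $p(\theta)>0$ on the compact $\Theta$ and $\widehat{L}_n$ is bounded there. Invoking Lemma~\ref{lem:second} with $\theta_\star=(\underline\rho,0)$ then completes the argument.
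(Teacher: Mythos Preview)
Your proposal is correct and tracks the same overall architecture as the paper's proof (compute the limit criterion, profile out $\lambda$, locate the maximizer, invoke Lemma~\ref{lem:second}), but diverges in execution at two points. First, you compute $\mathcal{L}(\theta,\lambda)$ via Szeg\H{o}-type Toeplitz limits, whereas the paper evaluates $\text{Tr}[\Omega_n(\theta)\Phi_n(\lambda)]$ and $|\Phi_n(\lambda)|$ by direct elementary algebra; the latter is lighter here since both quantities admit short exact expressions, but your spectral route is valid and would generalize more readily. Second, and more usefully, you algebraically reduce the profiled criterion to $c-\tfrac14\big(B+\sqrt{B^2-\rho^2}\big)$ with $B=\alpha^2+\rho^2+1$ and locate the maximizer by a two-line monotonicity argument (the bracket is increasing in $B$, hence in $\alpha$; at $\alpha=0$ it is strictly increasing in $\rho>0$). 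The paper instead works with the unsimplified rational expression and argues via two cases: Case~1 fixes $\rho=\underline\rho$ and invokes the ``$\underline\rho$ close to zero'' hypothesis to approximate, while Case~2 fixes $\alpha=0$ and defers to numerical verification of monotonicity in $\rho$. Your optimization step is thus cleaner and fully analytic; it also shows that the smallness of $\underline\rho$ is not actually needed for the maximization itself, only to keep the map $\lambda(\theta)$ (which carries a $1/\rho$ factor) well defined on the box.
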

	
	The joint variational state approximation $q_{\lambda }(x_{1}^{n})$ produces
	a {closed-form} marginal state approximation $q_{\lambda }(x_{n})$, for any $%
	n\geq 2$. Moreover, the marginal state posterior $\pi (x_{n}|\theta
	,y_{1}^{n})$ is also {known in closed form}, for a given value of $\theta $,
	and any $n\geq 2$. Given this, we can analytically evaluate the KL
	divergence between $q_{\lambda }(x_{n})$ and $\pi (x_{n}|\theta ,y_{1}^{n})$%
	, at any $n$, to characterize the accuracy of the resulting state
	approximation.
	
	\begin{corollary}
		\label{cor:one} Under $\mathcal{Q}_{x}$, for any $%
		n\geq 2$, we have $\text{KL}[\pi (x_{n}|\theta _{0},y_{1}^{n})||q_{\lambda
			(\theta _{\star })}(x_{n})]>0.$
	\end{corollary}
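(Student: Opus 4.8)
The plan is to exploit the fact that both $\pi(x_{n}|\theta_{0},y_{1}^{n})$ and $q_{\lambda(\theta_{\star})}(x_{n})$ are univariate Gaussian laws, for which the Kullback--Leibler divergence has a closed form that is strictly positive unless the two laws coincide exactly. First I would marginalize the joint variational density: because every diagonal entry of $\Phi_{n}(\lambda)$ equals one, the $n$-th marginal of $\mathcal{N}[x_{1}^{n};0,\nu(\lambda)\Phi_{n}(\lambda)]$ is simply $q_{\lambda}(x_{n})=\mathcal{N}(0,\nu(\lambda))$ with $\nu(\lambda)=\sigma_{0}^{2}/(1-\lambda^{2})$. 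The crucial features are that this marginal has mean zero, a variance that does not depend on the data, and is evaluated at the variational parameter $\lambda(\theta_{\star})$ pinned to $\theta_{\star}=(\underline{\rho},0)^{\prime}$. Next I would apply the Kalman recursions for the model of Example \ref{ex:one} to write $\pi(x_{n}|\theta_{0},y_{1}^{n})=\mathcal{N}(m_{n},v_{n})$, where the posterior variance $v_{n}=v_{n}(\theta_{0})$ is a deterministic function of $\theta_{0}$ and $n$ alone (the error-covariance recursion in a linear Gaussian model carries no observations), while the mean $m_{n}=m_{n}(\theta_{0},y_{1}^{n})$ is a nondegenerate linear functional of the observations whenever $\alpha_{0}>0$.

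Substituting these two Gaussians into the standard divergence formula yields
\begin{equation*}
\text{KL}[\pi(x_{n}|\theta_{0},y_{1}^{n})\,\|\,q_{\lambda(\theta_{\star})}(x_{n})]=\tfrac{1}{2}\Big[\log\tfrac{\nu(\lambda(\theta_{\star}))}{v_{n}}+\tfrac{v_{n}+m_{n}^{2}}{\nu(\lambda(\theta_{\star}))}-1\Big],
\end{equation*}
which is nonnegative and vanishes if and only if $m_{n}=0$ and $v_{n}=\nu(\lambda(\theta_{\star}))$ hold simultaneously. It therefore suffices to rule out this coincidence, and here I would split on the identification result of Lemma \ref{lem:three}. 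If $\alpha_{0}>0$ the Kalman gain is nonzero, so $m_{n}$ is a genuinely random linear combination of $y_{1}^{n}$ and hence nonzero with probability one under the data generating process; the divergence is then strictly positive almost surely. If instead $\alpha_{0}=0$, which forces $\rho_{0}>\underline{\rho}$ in order that $\theta_{\star}\neq\theta_{0}$, the observations are independent of the states, the posterior collapses to the prior with $m_{n}=0$, and the entire discrepancy must be carried by the variance.

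The main obstacle is therefore the variance comparison in this second case, where I must verify $v_{n}(\theta_{0})\neq\nu(\lambda(\theta_{\star}))$ for every $n\geq2$. Here the prior marginal variance is $v_{n}=\sigma_{0}^{2}(1-\rho_{0}^{2n})/(1-\rho_{0}^{2})$, which is strictly increasing in $n$ with smallest value $v_{2}=\sigma_{0}^{2}(1+\rho_{0}^{2})$, whereas the variational variance $\nu(\lambda(\theta_{\star}))=\sigma_{0}^{2}/(1-\lambda(\theta_{\star})^{2})$ is constant in $n$ and fixed by $\lambda(\theta_{\star})=\lambda((\underline{\rho},0)^{\prime})$ through the mapping displayed before Lemma \ref{lem:three}. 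Using that mapping one checks first the crude bound $\lambda((\underline{\rho},0)^{\prime})<\underline{\rho}<\rho_{0}$ and then, more sharply, the inequality $\lambda(\theta_{\star})^{2}(1+\rho_{0}^{2})<\rho_{0}^{2}$, which is exactly equivalent to $\nu(\lambda(\theta_{\star}))<v_{2}$. Since $v_{2}\leq v_{n}$ for all $n\geq2$, this places the constant variational variance strictly below the entire increasing sequence $\{v_{n}\}$, so the two variances never coincide and the divergence is strictly positive at every $n\geq2$. The delicate part is this last algebraic step: because $\underline{\rho}$ is only assumed close to zero and the claim must hold uniformly over the admissible $\rho_{0}\in(\underline{\rho},\overline{\rho}]$, I would substitute the explicit square-root form of $\lambda((\underline{\rho},0)^{\prime})$ and verify that $g(\rho_{0}):=\rho_{0}^{2}(1-\lambda(\theta_{\star})^{2})-\lambda(\theta_{\star})^{2}$ is increasing in $\rho_{0}$ and already positive at its smallest admissible value $\rho_{0}=\underline{\rho}$, rather than relying on a crude bound that would not survive as $\underline{\rho}\downarrow0$.
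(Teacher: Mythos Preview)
Your proof is correct in outline and shares its starting point with the paper: both identify $\pi(x_n\mid\theta_0,y_1^n)=\mathcal{N}(x_{n|n},P_{n|n})$ via the Kalman recursions, identify $q_{\lambda_\star}(x_n)=\mathcal{N}\bigl(0,\sigma_0^2/(1-\lambda_\star^2)\bigr)$ as the $n$-th marginal of the variational family, and plug into the closed-form Gaussian KL. The routes diverge at the last step. The paper avoids your case split entirely: with $\sigma_0^2=1$ and $y:=(1-\lambda_\star^2)P_{n|n}(\theta_0)$ it rewrites the divergence as
\[
\text{KL}=\tfrac{1}{2}\bigl[(y-\log y-1)+(1-\lambda_\star^2)\,x_{n|n}^2(\theta_0)\bigr]
\]
and invokes the elementary convexity inequality $y-\log y-1\geq0$ (equivalently $e^y/y\geq e$), dispatching the bound in one line without distinguishing $\alpha_0>0$ from $\alpha_0=0$ or tracking the filter explicitly. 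Your route instead argues via ``KL vanishes iff the two Gaussians coincide'' and then rules out that coincidence by case analysis. This is more labour, but it does confront strict positivity directly; the paper's one-line inequality, as written, only delivers $\geq0$ and is silent on why equality cannot occur.

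Two small remarks on your argument. First, the corollary is not stated under the restriction $\theta_0\neq\theta_\star$, so the clause ``which forces $\rho_0>\underline{\rho}$'' imports an assumption that is not there; happily your variance comparison goes through at $\rho_0=\underline{\rho}$ as well, so this is cosmetic. Second, the ``delicate algebraic step'' you defer does go through cleanly: with $a=1+\underline{\rho}^2$ and $b=\sqrt{1+\underline{\rho}^2+\underline{\rho}^4}=\sqrt{a^2-\underline{\rho}^2}$ one has $\lambda_\star=(a-b)/\underline{\rho}$ and $(a-b)(a+b)=\underline{\rho}^2$, so the needed inequality $\lambda_\star^2(1+\underline{\rho}^2)<\underline{\rho}^2$ reduces to $a<(a+b)^2$, immediate since $a>1$ and $b>0$.
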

	
	The above result demonstrates that for any $n\geq 2$,  the optimal variational state density is a biased
	approximation of the exact state density. Thus, even if $\theta _{0}$ were
	known, and we only wished to conduct inference on $x_{1}^{n}$, the resulting
	variational approximation of the state density would ultimately deliver a
	poor approximation.
\end{example}

\section{Implications}
The above results suggest that VB methods can lead to inaccurate inference in the case of SSMs. In this section, we discuss further the implications of {these results for inference on the global parameters, plus their implications for predictive accuracy.} 
\subsection {Inference on global parameters}
{When conducting {VB in SSMs}, the need to approximate the posterior of $x_{1}^{n}$
introduces a discrepancy between the exact posterior, and {that which results from the VB approach}}. In this way, we can view
the latent states $x_{1}^{n}$ as \textit{incidental or nuisance} parameters {%
	(see \citealp{lancaster2000incidental}, for a review)}, which are needed to
make feasible the overall optimization problem, but which, in and of
themselves, are not the object of interest. {A similar point is made by \cite{westling2019beyond} in the case of independent states, and frequentist variational inference, where the authors demonstrate that inconsistency can occur,
even in the case of independent observations, if delicate care is not taken
with the {choice of variational class for} $x_{1}^{n}.$}

{However, the incidental parameter problem} has not stopped
researchers from using VB methods to conduct inference on $\theta $ in SSMs. 
While the general conclusions elucidated above apply, in principle, to \textit{all 
	such methods}, we next discuss two specific categories of VB methods in greater detail, and comment on {their} ability to deliver
consistent {inference} for $\theta _{0}$. %{We begin with reference to a
%	variational approach to point estimation of }$\theta _{0}$, {followed by
%	brief outlines of two }classes of approach that{\ target the posterior for }$%
%\theta $ {via variational methods. }
%{In Section \ref{pred}, we continue the
%	exposition of VB methods within the context of
%	prediction. The implications of Bayesian consistency (or lack thereof) for
%	predictive accuracy are discussed therein, and {with numerical examples used
%		to document the differences between various approaches.}}

\subsubsection{Integration approaches\label{int}}

{A possible }VB approach is to first `integrate out' the latent states so
that there is no need to perform joint inference on $(\theta ,x_{1}^{n})$.
Such an approach can be motivated by the fact that if we take $%
q_{x}(x_{1}^{n}|\theta )=\pi (x_{1}^{n}|y_{1}^{n},\theta ),$ (i.e. take the
variational approximation for $x_{1}^{n}$\ to be equivalent\textbf{\ }to the
exact posterior for $x_{1}^{n}$ conditional on $\theta $),\textbf{\ }then we
can rewrite $\text{KL}(q||\pi )$ as 
\begin{eqnarray*}
	\text{KL}(q||\pi ) &=&\int_{\Theta }\int_{\mathcal{X}}q_{\theta }(\theta
	)\pi (x_{1}^{n}|y_{1}^{n},\theta )\log \frac{q_{\theta }(\theta )\pi
		(x_{1}^{n}|y_{1}^{n},\theta )}{\pi (x_{1}^{n}|y_{1}^{n},\theta )\pi (\theta
		|y_{1}^{n})}\dt x_{1}^{n}\dt\theta \\
	&=&\text{KL}[q_{\theta }||\pi (\theta |y_{1}^{n})],
\end{eqnarray*}%
with the final line exploiting the fact that $\pi
(x_{1}^{n}|y_{1}^{n},\theta )$ integrates to one for all\ $\theta .$\ Thus,
if we are able to use as our variational approximation for the states the
actual (conditional) posterior, {we can transform} a variational problem for 
$(\theta ,x_{1}^{n})$ into a variational problem for $\theta $ alone.

The above approach is adopted by \cite{loaiza2020fast}, and is applicable in
any case where draws from $p(x_{1}^{n}|y_{1}^{n},\theta )$ can be reliably
and cheaply obtained, with the resulting draws then used to `integrate out'\
the states via the above KL divergence representation. While the approach of 
\cite{loaiza2020fast} results in the above simplification, the real key to
their approach is that it can be used to unbiasedly estimate the gradient of 
$\text{ELBO}[q_{\theta }||\pi (\theta |y_{1}^{n})]$\ (equivalent, in turn,
to the gradient of the joint ELBO in (\ref{elbo}), by the above argument).
This, in turn, allows optimization over $q_{\theta }$ to produce an
approximation to the posterior $\pi (\theta |y_{1}^{n})$. Indeed, such an
approach can be applied in many SSMs, such as unobserved component models
like the LGSSM, in which draws from $\pi (x_{1}^{n}|y_{1}^{n},\theta )$ can
be generated {exactly via, for example, forward (Kalman) filtering and
	backward sampling (\citealp{carter1994gibbs,fruhwirth1994data}); or various
	nonlinear models ({e.g. those featuring }stochastic volatility), in which {%
		efficient }}Metropolis- Hastings-{{within-Gibbs algorithms are available (%
		\citealp{kim1998stochastic,jacquier2002bayesian,primiceri2005time,huber2020inducing}%
		)}. }

In cases where we are not able to sample {readily }from $\pi
(x_{1}^{n}|y_{1}^{n},\theta )$ it may still be possible to integrate out the
states {using particle filtering methods.} To this end, assume that we can
obtain an unbiased estimate of the observed data likelihood $p_{\theta
}(y_{1}^{n})$ using a particle filter, which we denote by $\widehat{p}%
_{\theta }(y_{1}^{n})$. {We follow \cite{tran2017variational}} and write $%
\widehat{p}_{\theta }(y_{1}^{n})$ as $\widehat{p}(y_{1}^{n}|\theta ,z)$ to
make the estimator's dependence on the random filtering explicit through the
dependence on a random variable $z$, with $z$ subsequently defined by the
condition $z=\log \widehat{p}(y_{1}^{n}|\theta ,z)-\log p_{\theta
}(y_{1}^{n})$. For $g(z|\theta )$ denoting the density of $z|\theta $, \cite%
{tran2017variational} consider VB for the augmented posterior 
\begin{equation*}
\pi (\theta ,z|y_{1}^{n})=\widehat{p}(y_{1}^{n}|\theta ,z)g(z|\theta
)p(\theta )/p(y_{1}^{n})=p_{\theta }(y_{1}^{n})\exp (z)g(z|\theta )p(\theta
)/p(y_{1}^{n})=\pi (\theta |y_{1}^{n})\exp (z)g(z|\theta ),
\end{equation*}%
which, marginal of $z$, has the correct target posterior $\pi (\theta
|y_{1}^{n})$ {due to the unbiasedness of the estimator }$\widehat{p}%
(y_{1}^{n}|\theta ,z)$. The {authors} refer to the resulting method as
variational Bayes with an intractable likelihood function (VBIL). The VBIL
posteriors can be obtained by considering a variational approximation to $%
\pi (\theta ,z|y_{1}^{n})$ that minimizes the KL divergence between $%
q(\theta ,z)=q_{\theta }(\theta )g(z|\theta )$ and $\pi (\theta
,z|y_{1}^{n}) $: 
\begin{flalign*}
\text{KL}[q(\theta ,z)||\pi (\theta ,z|y_{1}^{n})]&=\int_{\Theta }\int_{%
	\mathcal{Z}}q_{\theta }(\theta )g(z|\theta )\log \frac{q_{\theta }(\theta
	)g(z|\theta )}{\pi (\theta |y_{1}^{n})\exp (z)g(z|\theta )}\dt z\dt\theta  \\&=\int_{\Theta }\int_{\mathcal{Z}}q_{\theta }(\theta )g(z|\theta )\log 
\frac{q_{\theta }(\theta )}{p_{\theta }(y_{1}^{n})\exp (z)p(\theta )}\dt z\dt%
\theta +\log p(y_{1}^{n}) \\
&=-\int_{\Theta }q_{\theta }(\theta )\log p_{\theta }(y_{1}^{n})\dt\theta +%
\text{KL}(q_{\theta }||p_{\theta })+\Upsilon _{n}[q(\theta ,z)]+\log
p(y_{1}^{n}),
\end{flalign*}where in this case 
\begin{equation*}
\Upsilon _{n}[q(\theta ,z)]=\int_{\Theta }\int_{\mathcal{Z}}q_{\theta
}(\theta )g(z|\theta )\left\{ \log p_{\theta }(y_{1}^{n})-\log {\widehat{p}%
	(y_{1}^{n}|\theta ,z)}\right\} \dt z\dt\theta .
\end{equation*}%
For fixed $\theta $, $\mathbb{E}_{z}\left[ \widehat{p}(y_{1}^{n}|\theta \,z)%
\right] =p_{\theta }(y_{1}^{n})$, but in general $\log \widehat{p}%
(y_{1}^{n}|\theta ,z)$ is a biased estimator of $\log p_{\theta }(y_{1}^{n})$%
, {from which it follows that} $\Upsilon _{n}[q(\theta ,z)]\geq 0.$ However,
in contrast to the general approximation of the states discussed {in Section %
	\ref{vi}}, which intimately relies on the choice of the approximating
density $q_{x}(x_{1}^{n}|\theta )$, VBIL can achieve consistent inference on 
$\theta _{0}$ by choosing {an appropriate} number of particles $N$ {in the
	production of }$\widehat{p}(y_{1}^{n}|\theta ,z)$ .

To see this, we recall that a maintained assumption in the literature on {%
	PMCMC} methods is that, for all $n$ and $N$, the conditional mean and
variance of the density $g(z|\theta )$ satisfy $\mathbb{E}[z|\theta
]=-\gamma (\theta )^{2}/2N$, and $\text{Var}\left[ z|\theta \right] =\gamma
(\theta )^{2}/N$, where $\gamma (\theta )^{2}$ is bounded uniformly over $%
\Theta $; see, e.g., Assumption 1 in \cite{doucet2015efficient} and
Assumption 1 in \cite{tran2017variational}. However, in general, $N$ is
assumed to be chosen so that $\mathbb{E}[z|\theta ]=-\sigma ^{2}/2$ and $%
\text{Var}\left[ z|\theta \right] =\sigma ^{2}>0$, $0<\sigma <\infty $. Note
that, under this choice for $N$, for any $\varepsilon >0$ 
\begin{equation*}
\lim_{n\rightarrow \infty }\text{Pr}\left[ \Upsilon _{n}[q(\theta
_{0},z)]/n>\varepsilon \right] =\lim_{n\rightarrow \infty }\text{Pr}\left[
-q_{\theta }(\theta _{0})\mathbb{E}\left[ z|\theta _{0}\right] >n\epsilon %
\right] =\lim_{n\rightarrow \infty }\text{Pr}\left[ q_{\theta }(\theta
_{0})\sigma ^{2}/2>n\varepsilon \right] =0,
\end{equation*}%
assuming $q_{\theta }(\theta _{0}),\sigma ^{2}<\infty $.

From this condition, we see that the VBIL inference problem is
asymptotically the same as the VB inference problem for $\theta $ alone.
Consequently, existing results on the posterior concentration of VB methods
for $\theta $ alone can be used to deduce posterior concentration of the
VBIL posterior for $\theta $.

\subsubsection{Structured approximations of the states}

Yet {another approach} for dealing with variational inference {in the
	presence of} states is to consider a structured approximation that allows
for a dynamic updating of the approximation for the posterior of the states.
Such an approximation can be achieved by embedding in the class of
variational densities an analytical filter, like the Kalman filter. \cite%
{koop2018variational} propose the use of the Kalman filter \textit{within VB}
{(VBKF) }as a means of approximating the posterior density of the states
using Kalman recursions. In particular, the authors approximate the
posterior $\pi (x_{1}^{n}|y_{1}^{n},\theta )$ by approximating the
relationship between $X_{t}$ and $X_{t-1}$, which may in truth be non-linear
in $\theta $, by the random walk model $X_{t}=X_{t-1}+\epsilon _{t}$, with $%
\epsilon _{t}\sim i.i.d.N(0,\sigma _{0}^{2})$, and then use Kalman filtering
to update the states in conjunction with a linear approximation to the
measurement equation. Using this formulation, the variational approximation
is of the form $q(x_{1}^{n},\theta )=q_{\theta }(\theta )q_{x}(x_{1}^{n})$,
where $q_{x}(x_{1}^{n})\propto \prod_{k\geq 1}\exp (-\left\{ x_{k}-\widehat{x%
}_{k|k}\right\} ^{2}(1-\mathcal{K}_{k})P_{k|k-1}/2)$ and where the terms $%
\mathcal{K}_{k},P_{k|k-1},\widehat{x}_{k|k}$ are explicitly calculated using
the Kalman recursion: $\widehat{x}_{k|k}=\widehat{x}_{k|k-1}+\mathcal{K}%
_{k}(y_{t}^{\star }-\widehat{x}_{k|k-1}),$ and where $\mathcal{K}_{k}$ is
the Kalman gain, $P_{k|k-1}$ is the predicted variance of the state, and in
the application of \cite{koop2018variational}, $y_{t}^{\star }=\log
(y_{t}^{2})$.

While the solution proposed by the VBKF is likely to lead to better
inference on the states, especially when $x_{1}^{n}$ behaves like a random
walk, ultimately we are still `conducting inference'\ on $x_{1}^{n}$, and
thus we still encounter the incidental parameter problem {as a consequence}.
Indeed, taking as the variational family for $x_{1}^{n}$ the Kalman filter
approximation yields, at time $k\geq 1$, a conditionally normal density with
mean $\hat{x}_{k|k}$ and variance $(1-\mathcal{K}_{k})P_{k|k-1}$. Hence, we
have a variational density that has the same structure as in Lemma \ref%
{lem:lgm}, but which allows for a time varying mean and variance. Given this
similarity, there is no reason to suspect that such an approach will yield
inferences that are consistent. Indeed, further intuition can be obtained by
noting that, in the VBKF formulation, the simplification of the state
equation means that we disregard any dependence between the states and the
values of $\theta $ that drive their dynamics.

The variational approach of \cite{chan2020fast} can be viewed similarly: the
suggested algorithm assumes and exploits a particular dynamic structure for
the states that allows for analytical (posterior) updates and thus leads to
computationally simple estimates for the variational densities of $%
q_{x}(x_{1}^{n}).$ As with the VBKF approach, the assumed nature of the
state process used by \cite{chan2020fast} to estimate $q_{x}(x_{1}^{n})$
implies that, in general, it is unlikely that Bayesian consistency can be
achieved. {Due to space restrictions, further discussion on the specifics of this
approach are relegated to Sections \ref{approx} and \ref{sec:cy2020} of the Supplementary Appendix}.

\subsection{VB-based prediction\label%
	{pred}}

{VB provides, at best, an approximation to
the posterior and, as a result, may well yield less accurate inferences than
those produced by the exact posterior (see, e.g. \citealp{koop2018variational}; %
\citealp{gunawan2020variational}). However,  VB} \textit{can} perform admirably in predictive settings{, see, e.g., {\cite%
		{quiroz2018gaussian}\ and \cite{frazierloss}, amongst others}, in
	the sense of replicating the out-of-sample accuracy achieved by exact
	predictives, when such comparators are available. }(See 
\citealp{frazier2019approximate} {for a comparable finding in the context of
	predictions based on }ABC.) Therefore, even though the VB posterior may not
necessarily converge to the true value $\theta _{0}$, so long as the value onto which
it is concentrating is not too far away from $\theta_0$, it may be that VB-based predictions perform well in practice.

Recall the conditional density of $Y_{n+1}$ given $x_{n+1}$ and $\theta $ is 
$g_{\theta }(Y_{n+1}|x_{n+1})$, so that the predictive pdf for $Y_{n+1}$ can
be expressed as 
\begin{flalign}
p(Y_{n+1}|y_{1}^{n}) &=\int_{\Theta }\int_{\mathcal{X}}g_{\theta }(Y_{n+1}|x_{n+1})\pi (x_{1}^{n+1},\theta |y_{1}^{n})%
\dt x_{1}^{n+1}\dt\theta \label{eq:pred}\\
&=\int_{\Theta }\int_{\mathcal{X}}\int_{\mathcal{X}}g_{\theta
}(Y_{n+1}|x_{n+1})\underbrace{p(x_{n+1}|x_{n},y_{1}^{n},\theta
	)p(x_{1}^{n}|y_{1}^{n},\theta )}_{(1)}\underbrace{\pi (\theta |y_{1}^{n})}%
_{(2)}\dt x_{n+1}\dt x_{1}^{n}\dt\theta ,  \nonumber
\end{flalign}
where the last line follows from the Markovianity of the state
transition equation (see equation \eqref{eq:HMM_1}). In many large SSMs,
using MCMC methods to estimate \eqref{eq:pred} is infeasible or {prohibitive
	computationally}, {due to the difficulty of sampling from }$\pi
(x_{1}^{n+1},\theta |y_{1}^{n}).$ {Instead}, VB methods can produce an
estimate of $p(Y_{n+1}|y_{1}^{n})$ by approximating, in various
ways, the two pieces in equation \eqref{eq:pred} underlined as (1) and (2).
All such methods replace the second underlined term by some approximate
posterior for $\theta $, but differ in how they access the first underlined
term.

In all the cases {of which }we are aware, we can separate VB methods
for prediction in SSMs into two {classes}: a class which makes explicit use
of a variational approximation to the states, $\widehat{q}_{x}$ {to replace }%
$p(x_{1}^{n}|y_{1}^{n},\theta )$; and a class that uses an accurate
simulation-based estimate of $p(x_{1}^{n}|y_{1}^{n},\theta )$. Due to space restrictions, we do not give a detailed discussion of how these VB predictives are produced, and instead refer the interested reader to Section \ref{sec:vbpred} in the Supplementary Appendix. 

Any Bayesian method that replaces $\pi(\theta|\y)$ in part (2) by an approximation, e.g., $\widehat{q}_\theta$ in the case of VB, will lead to some inaccuracy,  however, as shown by \cite{frazier2019approximate} in the case of ABC, this loss in accuracy is often minimal. Therefore, what really matters in terms of accurate prediction in SSMs using VB is the replacement of (1) in \eqref{eq:pred}. Replacing (1) with an accurate simulation-based estimate is likely to deliver more accurate estimators, at the cost of additional computation. However, it is not necessarily clear that the resulting predictions will perform much better than those approaches based on the approximation $\widehat{q}_x$. In the following section, we demonstrate that even through the inference that results from using $\widehat{q}_x$ instead of (1) can be poor, the resulting predictive performance is often quite reasonable, at least for sample sizes that are not too large.

\section{Numerical assessment of VB methods}

In this section, we shed {further }light on the phenomenon of {the }%
predictive accuracy of VB methods, and connect the performance of these
methods to the inconsistency for $\theta _{0}$ that can result as the sample
size diverges. {The results suggest that, in terms of predictive accuracy, there is little difference
between methods  in small
sample sizes or with a small number of out-of-sample observations. However, we document a
clear hierarchy across methods as the sample}
size becomes larger and as the out-of-sample evaluation increases.

\subsection{Simulation design}

{We now compare the inferential and
predictive accuracy of the variational methods of \cite%
{quiroz2018gaussian} and \cite{loaiza2020fast} against an exact MCMC-based
estimate of $\pi (\theta ,x_{1}^{n}|y_{1}^{n})$, referred to as `exact
Bayes' hereafter,  in a simulation exercise}. {In Section \ref{sec:compdets} of the Supplementary Appendix, we provide complete details on the implementation of each of these methods under this particular simulation design.} However, we remark here that \cite{quiroz2018gaussian} is an example of a VB method in which the states are approximated via a particular choice of variational family, whilst \cite{loaiza2020fast} (as noted in Section \ref{int}) adopt a variational approximation for the posterior of the global parameters only, with the conditional posterior of the states accessed via simulation.

The assumed DGP is specified as an unobserved component
model with stochastic volatility (UCSV): 
\begin{equation}
\mu _{t}=\bar{\mu}+\rho _{\mu }\left( \mu _{t-1}-\bar{\mu}\right) +\sigma
_{\mu }\varepsilon _{t},\;\;h_{t}=\bar{h}+\rho _{h}\left( h_{t-1}-\bar{h}%
\right) +\sigma _{h}\eta _{t},\;\;Y_{t}=\mu _{t}+\exp (h_{t}/2)u_{t},\;\;
\label{1}
\end{equation}%
where $(\varepsilon _{t},\eta _{t},u_{t})^{^{\prime }}\overset{i.i.d.}{\sim }%
N(0,I_{3})$. The unobserved component term $\mu _{t}$ is a latent variable
that captures the persistence in the conditional mean of $Y_{t}$, while the
stochastic volatility term $h_{t}$ captures the persistence in the
conditional variance. We consider the following three set of values for the
true parameters: 
\begin{equation*}
\begin{array}{cc}
\text{DGP 1:} & \bar{\mu}_{0}=0;\text{ }{\rho _{\mu }}_{0}=0.8;\text{ }{%
	\sigma _{\mu }}_{0}=0.5;\text{ }\bar{h}_{0}=-1.0;\text{ }{\rho _{h}}%
_{0}=0.00;\text{ }{\sigma _{h}}_{0}=0.0 \\ 
\text{DGP 2:} & \bar{\mu}_{0}=0;\text{ }{\rho _{\mu }}_{0}=0.0;\text{ }{%
	\sigma _{\mu }}_{0}=0.5;\text{ }\bar{h}_{0}=-1.3;\text{ }{\rho _{h}}%
_{0}=0.95;\text{ }{\sigma _{h}}_{0}=0.3 \\ 
\text{DGP 3:} & \bar{\mu}_{0}=0;\text{ }{\rho _{\mu }}_{0}=0.8;\text{ }{%
	\sigma _{\mu }}_{0}=0.5;\text{ }\bar{h}_{0}=-1.3;\text{ }{\rho _{h}}%
_{0}=0.95;\text{ }{\sigma _{h}}_{0}=0.3%
\end{array}%
\end{equation*}%
The specifications for DGP 1 produce a time series process that has
substantial persistence in the conditional mean, and a constant variance;
DGP 2 generates a process that has substantial persistence in the
conditional variance, and a fixed marginal mean of zero; whilst DGP 3 corresponds to
a process that exhibits persistence in both the conditional mean and
variance. The true parameter vector in each case is defined as $\theta
_{0}=\left( \bar{\mu}_{0},{\rho _{\mu }}_{0},{\sigma _{\mu }}_{0},\bar{h}%
_{0},{\rho _{h}}_{0},{\sigma _{h}}_{0},\right) ^{\prime }$.

For the predictive assessment we compare exact Bayes with the two
variational methods cited above plus the method of \cite{chan2020fast}. As
discussed in Section \ref{sec:compdets} of the Supplementary Appendix, the method of \cite{chan2020fast} exploits a very specific structure in the
construction of the variational algorithm, which in this case
corresponds to DGP 2 under the parameter restrictions ${\rho _{h}}_{0}=1.0$, ${\sigma_{\mu }}_0=0.0$, and $\bar{h}_{0}=0.0$. Thus, application of this approach under any of the above true
DGPs constitutes misspecified inference; hence, we do not include this
technique in the inferential assessment. Due to space constraints, certain tables and figures are included in Section \ref{app:results} of the Supplementary Appendix.

\subsection{Accuracy of inference on the states}

\label{sec: stat_inf}

We assess inferential accuracy through lens of state estimation. To this end,
we generate a times series of length $T=11000${\ from each of the three true
	DGP specifications.} {The full sample} is used {to produce the} exact
posterior as well as {the two} {approximate posteriors corresponding to the
	QNK and LSND methods};{\ hence, we are able to shed some light on the
	theoretical consistency results provided above. }We assess the inferential
accuracy of each method {(exact and approximate) }by calculating the root
mean squared error ({RMSE)} and mean absolute error (MAE) of {each sequence
	of} {marginal }posterior means, {for }$t=1,2,....,T$, for the unobserved
component, $\mu _{t},$ and the stochastic {{standard deviation}, $\exp
	(h_{t}/2)$}, relative to the{\ marginal posterior means that }results when
we condition on the true parameters, denoted respectively by $\mathbb{E}[\mu
_{t}|\theta _{0},y_{1}^{T}]$ and $\mathbb{E}[\exp (h_{t}/2)|\theta
_{0},y_{1}^{T}]$, $t=1,2,....,T.$ The results are presented in Table~\ref{tab:inferacc}.

As expected, exact Bayes produces the most accurate point estimates for the
	two sets of latent variables (both $\mu _t$ and $\exp (h_t/2)$), as
tallies with the theoretical guarantees of this method. In terms of the VB methods, the LSND results closely match those of exact
	Bayes as this method does not suffer from the incidental parameter problem. In contrast, the QNK method
does not deal directly with this problem and, as a consequence, exhibits -
across \textit{all of the designs} recorded in Table \ref{tab:inferacc} -
inaccuracy that is between two and ten times greater than that of both exact Bayes and the LSND method. From the results recorded in Table \ref{tab:comp} in Supplementary Appendix \ref{app:results}, we also note that the time taken to estimate the UCSV model, under all three DGPs, is approximately the same for exact Bayes and the LSND method, with the QNK approach taking roughly twice as long as both.

\begin{table}[h]
	\caption{Accuracy in the estimation of the unobserved component and
		conditional standard deviation. Panel A presents the root mean squared error
		(RMSE) and mean absolute error (MAE) of the posterior mean estimates of the
		unobserved component ($\protect\mu _{t}$). The columns correspond to the
		three DGP specifications, while the rows correspond to the three predictive
		methods: exact Bayes, LSND and QNK. Panel B presents the corresponding
		results for the posterior mean estimates of the conditional standard
		deviation. The unobserved component error measures are computed relative to $%
		\mathbb{E}[\protect\mu _{t}|\protect\theta _{0},y_{1}^{T}]$, while the
		conditional standard deviation error measures are computed relative to $%
		\mathbb{E}[\exp (h_{t}/2)|\protect\theta _{0},y_{1}^{T}]$, where $\theta _{0}$ denotes the true parameter vector and $T=11000$.}
	\label{tab:inferacc}\centering{\footnotesize
		\begin{tabular}{lrrrrlrrr}
			\hline\hline
\textbf{Panel A: ($\mu _{t}$)}			&        &                   {RMSE} &        &  &             &        &                   {MAE} &        \\
			&  DGP 1 &                    DGP 2 &  DGP 3 &  &             &  DGP 1 &                   DGP 2 &  DGP 3 \\ \cline{2-4}\cline{7-9}
			Exact Bayes & 0.0463 &                   0.0207 & 0.0203 &  & Exact Bayes & 0.0382 &                  0.0004 & 0.0155 \\
			LSND        & 0.0495 &                   0.0253 & 0.0271 &  & LSND        & 0.0405 &                  0.0006 & 0.0207 \\
			QNK         & 0.1211 &                   0.2646 & 0.1098 &  & QNK         & 0.0980 &                  0.0700 & 0.0664 \\ \hline
	\textbf{Panel B: ($\exp(h_{t}/2) $)}				&        & \multicolumn{1}{c}{RMSE} &        &  &             &        & \multicolumn{1}{c}{MAE} &        \\
			&  DGP 1 &                    DGP 2 &  DGP 3 &  &             &  DGP 1 &                   DGP 2 &  DGP 3 \\ \cline{2-4}\cline{7-9}
			Exact Bayes & 0.0497 &                   0.0255 & 0.0234 &  & Exact Bayes & 0.0470 &                  0.0198 & 0.0180 \\
			LSND        & 0.0520 &                   0.0309 & 0.0315 &  & LSND        & 0.0490 &                  0.0242 & 0.0246 \\
			QNK         & 0.0984 &                   0.2505 & 0.2231 &  & QNK         & 0.0983 &                  0.2179 & 0.1656 \\ \hline\hline
		\end{tabular}%
	}
\end{table}

We further highlight the results in Table \ref{tab:inferacc} by plotting, in
Figure \ref{fig:states_post_mean}, the marginal posterior means for both $%
\mu _{t}${\ and }$\exp (h_{t}/2)${, for each point in time across a given
	sample period, for all three methods; with the sequence of `true' posterior
	means (that condition on }$\theta _{0}$){\ included for comparison.} For the
sake of brevity, we only present results for DGP 2, with the corresponding
results for DGPs 1 and 3 placed in Supplementary Appendix \ref{app:results}. Consistent with the summary results in Table \ref{tab:inferacc}, the
posterior means for exact Bayes and LSND are both very similar, for each $t$%
, and visually very close to the corresponding true time posterior means,
across the entire sample. In comparison, the QNK method consistently
produces {point estimates of }the states that are {very different }from the {%
	values that condition on the true parameters}, {as accords with the
	dependence of the method on a variational approximation for the states, and
	the consequent loss of Bayesian consistency for }$\theta _{0}$. We note that
the additional figures in Supplementary Appendix \ref{app:results}
demonstrate that, at least visually, the QNK method seems to produce more
accurate estimates of $\mu _{t}$ under DGPs 1 and 3 than it does
	under DGP 2; however, it remains inaccurate in terms of estimating 
$\exp (h_{t}/2)$ under these alternative DGPs.

\begin{figure}[h!]
	\begin{center}
			\includegraphics*[scale = 0.5]{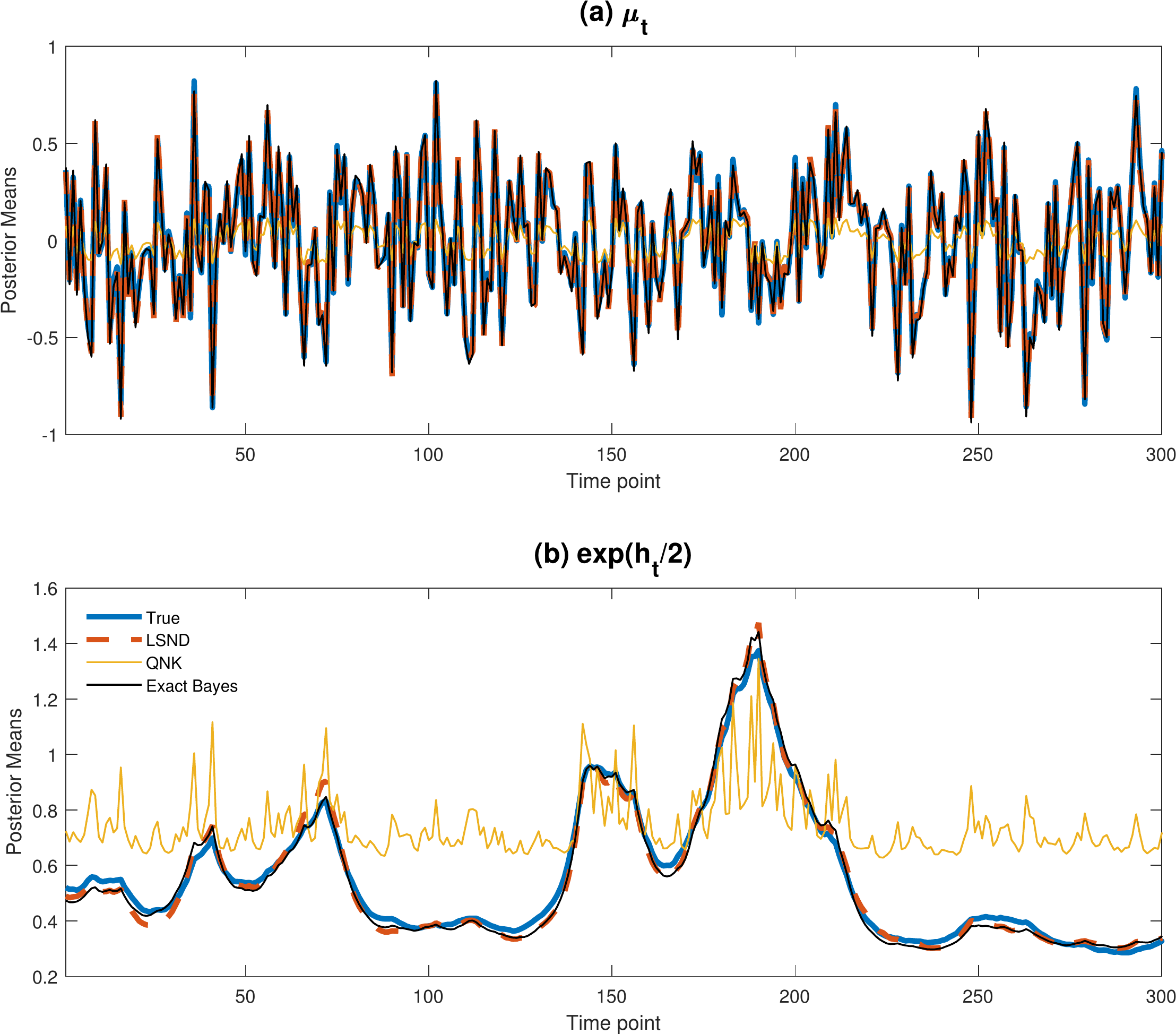}
	\end{center}
	\caption{Posterior means of the latent states, under DGP 2, over the first
		300 time points. Panel (a) plots the posterior mean for $\protect\mu_t$. The
		red, gold and black lines plot, respectively, the posterior means based on
		the LSND, QNK and exact Bayes approaches. The blue line plots the posterior
		mean that conditions on the true parameters. Panel (b) presents
		corresponding results for $\exp(h_t/2)$.}
	\label{fig:states_post_mean}
\end{figure}

%{While the above results demonstrate that there is a clear ranking between
%	the different VB methods with regard to inference}, {in the following
%	section we demonstrate that inferential inaccuracy does not always translate
%	into such a stark ranking in terms of predictive inaccuracy, although
%	differences certainly \textit{do} still arise.}

\subsection{Predictive accuracy}\label{sec:comp}%

To assess the predictive accuracy of each method we conduct an expanding
window prediction exercise using the same generated data as in the previous
subsection. The exercise consists of constructing the Bayesian predictive
density for $Y_{n+1}$, conditional on the sample $y_{1}^{n}$, for each of
the competing approaches and for $n\in \{1000,\dots ,T-1\}$. For each method
and each out-of-sample time point we evaluate eight measures of predictive
accuracy: the logarithmic score, four censored scores, the continuously
ranked probability score, the tail weighted continuously ranked probability
score and the interval score. Details of all scoring rules, including
appropriate references, are provided in Section \ref{sect:scoring rules} of
the Supplementary Appendix. We document results using 100, 1000
	and 10000 out-of-sample evaluations respectively, remembering that the CY
method is now included in the comparison, but only for the case of DGP 2. For reasons of space, we only present results for the largest number of out-of-sample evaluations (10000) in the main text, in Table \ref{tab:predacc}, while the results for the other evaluation periods are given in Section \ref{app:results} of the Supplementary Appendix, in Tables \ref{tab:predacc100} and \ref{tab:predacc1000} respectively.

Focussing first on the results in Table~\ref{tab:predacc},
	based on the very large number of out-of-sample evaluations, we observe an
	interesting ranking. Across all designs, and according to all measures of
	accuracy, exact Bayes is the most accurate method. As accords with the
	inferential results discussed above, the LSND method has a predictive
	accuracy that often matches, or is extremely similar to, that of exact
	Bayes, followed, in order, by CY and QNK.  A similar ranking holds
	for the results recorded in {Tables \ref{tab:predacc100} and \ref{tab:predacc1000}} in Section \ref{app:results} of the Supplementary Appendix. However, the differences {between methods }are
somewhat less stark over the smaller out-of-sample evaluation periods, which
highlights the fact that it is ultimately the consistency properties of
the different VB methods (in evidence for the largest evaluation period, given the large size of the expanding estimation windows)
that is driving the discrepancies between the predictive accuracy of the
competing methods.

Whilst a ranking is {in evidence} {in Table \ref{tab:predacc}}, it
can be argued that across \textit{certain} DGP and scoring rule
combinations, the predictive results across the different methods are 
{still quite} similar, both between the exact and (all) VB methods, and
between the different VB methods. That is, for certain combinations of DGPs
and scoring rules, all methods are {seen to perform well (relative to
	the benchmark of the true predictive)}, and the more substantial \textit{%
	inferential} discrepancies observed between certain of the methods are not
reflected at the predictive level. This finding corroborates the point made
earlier, and which has been supported by other findings in the literature,
namely that computing a posterior {via an approximate method does not 
	\textit{necessarily} reduce predictive accuracy (relative to exact Bayes) by
	a }substantial{\textit{\ }amount. }

{However, despite there being} certain DGP and scoring rule combinations
where the methods perform similarly, this is not true across all DGPs and
loss measures, {in particular for the larger out-of-sample evaluation
	period}. For example, {and with specific reference to Table \ref%
	{tab:predacc},} there is a clear trend that as model complexity increases 
{(i.e. moving from DGP 1 through to DGP 3)}, variational methods that
work harder to correctly approximate the states have greater predictive
accuracy. This finding is particularly marked for the log score and the
interval score, which directly measure the dispersion of the posterior
predictive. {In the case of DGP 3}, the all-purpose variational
method of \cite{quiroz2018gaussian} performs the worst across all the 
{methods} under analysis, {and most notably for the log score
	and the interval score}. This feature is most likely due to the fact that
the posteriors associated with the method of \cite{quiroz2018gaussian} have
overly thin tails. Consequently, parameter uncertainty is not adequately
accounted for when constructing the posterior predictive, {which} results in
a predictive with thin tails, and ultimately translates into poor
performance in scores that measure both location and/or dispersion.

\begin{table}[h!]
	\caption{Predictive performance of competing Bayesian approaches: exact
		Bayes, LSND and CY and QNK. The column labels indicate the out-of-sample
		predictive performance measure while the row labels indicate the predictive
		method. `True DGP' indicates the productive results that condition on the
		true parameters. Panels A, B and C correspond to the results for DGP 1, 2
		and 3, respectively. The average predictive measures in this table were
		computed using 10000 out-of-sample evaluations.}
	\label{tab:predacc}\centering{\footnotesize
		\begin{tabular}{lcccccccc}
			\hline\hline
				{\underline{\textbf{Panel A: DGP 1}}}	& {LS} & {CS-10\%} & {CS-20\%} & {CS-80\%} & {CS-90\%} & {CRPS} & {TWCRPS }
			& {IS} \\ \cline{2-9}
			True DGP & -1.259 & -0.308 & -0.508 & -0.505 & -0.297 & -0.481 & -0.146 & 
			-4.001 \\ 
			Exact Bayes & -1.260 & -0.308 & -0.508 & -0.506 & -0.297 & -0.481 & -0.147 & 
			-4.012 \\ 
			LSND & -1.261 & -0.308 & -0.509 & -0.507 & -0.298 & -0.481 & -0.147 & -4.015
			\\ 
			CY & - & - & - & - & - & - & - & - \\ 
			QNK & -1.262 & -0.309 & -0.509 & -0.507 & -0.298 & -0.482 & -0.147 & -4.030
			\\ \hline
	{\underline{\textbf{Panel B: DGP 2}}}		& {LS} & {CS-10\%} & {CS-20\%} & {CS-80\%} & {CS-90\%} & {CRPS} & {TWCRPS }
			& {IS} \\ \cline{2-9}
			True DGP & -1.192 & -0.341 & -0.551 & -0.551 & -0.340 & -0.454 & -0.139 & 
			-4.097 \\ 
			Exact Bayes & -1.193 & -0.342 & -0.551 & -0.551 & -0.340 & -0.454 & -0.139 & 
			-4.093 \\ 
			LSND & -1.194 & -0.342 & -0.551 & -0.552 & -0.341 & -0.454 & -0.139 & -4.101
			\\ 
			CY & -1.205 & -0.346 & -0.557 & -0.555 & -0.344 & -0.456 & -0.140 & -4.182
			\\ 
			QNK & -1.212 & -0.350 & -0.560 & -0.560 & -0.349 & -0.456 & -0.140 & -4.316
			\\ \hline
{\underline{\textbf{Panel C: DGP 3}}}			& {LS} & {CS-10\%} & {CS-20\%} & {CS-80\%} & {CS-90\%} & {CRPS} & {TWCRPS }
			& {IS} \\ \cline{2-9}
			True DGP & -1.268 & -0.304 & -0.505 & -0.521 & -0.300 & -0.490 & -0.150 & 
			-4.424 \\ 
			Exact Bayes & -1.268 & -0.305 & -0.506 & -0.520 & -0.299 & -0.491 & -0.150 & 
			-4.423 \\ 
			LSND & -1.271 & -0.306 & -0.507 & -0.521 & -0.301 & -0.491 & -0.150 & -4.442
			\\ 
			CY & - & - & - & - & - & - & - & - \\ 
			QNK & -1.301 & -0.315 & -0.521 & -0.536 & -0.311 & -0.497 & -0.152 & -4.707
			\\ \hline\hline
		\end{tabular}%
	}
\end{table}

\section{Discussion}

We have systematically documented the behavior of variational methods, in
terms of inference and prediction, within the class of state space models 
(SSMs). Sufficient conditions for (both frequentist and Bayesian)
consistency of variational inference (VI) in SSMs have been presented in
terms of the so-called Jensen's gap, which measures the discrepancy
introduced within VI due to the approximation of the states. {Focusing on }%
variational {Bayes (VB) methods specifically, we show that only} methods that are
capable of closing Jensen's gap yield Bayesian consistent inference for the global parameters and, in turn, deliver more accurate inferences for the states.

In the context of empirically relevant SSMs, we find numerical evidence of a clear hierarchy in terms of the accuracy of state inference across different variational methods: methods that can close Jensen's gap produce
qualitatively more accurate inferences than those that do not. However, whilst this same hierarchy also holds for VB-based prediction, we find that the extent to which different variational approaches vary in terms for predictive accuracy depends on the data generating process (DGP), the loss in which
the different methods are evaluated, {and - most importantly - the size of the out-of-sample
	evaluation period.} Indeed, we document that there are certain
circumstances, i.e., {sample size, }DGP and loss combinations, where there is little to separate the various approaches. However, in large samples, methods that attain Bayesian consistent inference on the global parameters produce more accurate predictions.

To keep the length of this paper manageable, we have deliberately analysed
and compared only a select few of the variational methods used to conduct
inference and prediction in SSMs. Our findings, however, suggest that
certain classes of approximations for the state posterior employed in the
machine learning literature, e.g., classes based on normalising or
autoregressive flows, may be flexible enough to deliver accurate inferences
and predictions; we refer to, e.g., \citealp{ryder2018black}, and the
references therein, for a discussion of such methods in SSMs. We leave a
comparison between the approaches discussed herein and those commonly used
in machine learning for future research.

\bibliographystyle{Myapalike}
\bibliography{Prediction_bib_merged_clean}

\appendix

\section{Further details and discussion on variational methods in SSMs}
\subsection{Methods for producing variational predictives}\label{sec:vbpred}
Following on from the discussion in Section \ref{pred} in the main text, in this section we give precise details on how the variational predictives are constructed. 
Recall that the predictive pdf for $Y_{n+1}$ can
be expressed as 
\begin{flalign}
p(Y_{n+1}|y_{1}^{n}) &=\int_{\Theta }\int_{\mathcal{X}}g_{\theta }(Y_{n+1}|x_{n+1})\pi (x_{1}^{n+1},\theta |y_{1}^{n})%
\dt x_{1}^{n+1}\dt\theta \label{eq:pred}\\
&=\int_{\Theta }\int_{\mathcal{X}}\int_{\mathcal{X}}g_{\theta
}(Y_{n+1}|x_{n+1})\underbrace{p(x_{n+1}|x_{n},y_{1}^{n},\theta
	)p(x_{1}^{n}|y_{1}^{n},\theta )}_{(1)}\underbrace{\pi (\theta |y_{1}^{n})}%
_{(2)}\dt x_{n+1}\dt x_{1}^{n}\dt\theta ,  \nonumber
\end{flalign}
VB methods can produce an
estimate of $p(Y_{n+1}|y_{1}^{n})$ by\textit{\ } approximating, in various
ways, the two pieces in equation \eqref{eq:pred} underlined as (1) and (2). VB methods
for prediction in SSMs either make explicit use
of a variational approximation to the states, $\widehat{q}_{x}$ {to replace }%
$p(x_{1}^{n}|y_{1}^{n},\theta )$; or use an accurate
simulation-based estimate of $p(x_{1}^{n}|y_{1}^{n},\theta )$. We now discuss these two approaches in more detail. 
\subsubsection{Approximation approaches\label{approx}}

{The VB methods that approximate }$p(Y_{n+1}|y_{1}^{n})${\ by constructing
	an approximation to} $p(x_{n+1}|x_{n},y_{1}^{n},\theta )$ $\times
p(x_{1}^{n}|y_{1}^{n},\theta )$ {all make} {use of a variational
	approximation }$\widehat{q}_{x}$ of $p(x_{1}^{n}|y_{1}^{n},\theta )$, {in
	addition }to using the structure of the state equation. To illustrate this,
it is perhaps easiest to consider the case where we seek to estimate %
\eqref{eq:pred} by generating values of $Y_{n+1}$ and using as our estimate
of $p(Y_{n+1}|y_{1}^{n})$ the kernel density obtained from the simulations.
In this way, we can see that simulation of $Y_{n+1}$ requires {simulating
	the }following random variables, {in sequence:} 
\begin{equation*}
\theta |y_{1}^{n};\;x_{1}^{n}|y_{1}^{n},\theta
;\;x_{n+1}|x_{n},y_{1}^{n},\theta ;\text{ and }Y_{n+1}|x_{n+1},\theta .
\end{equation*}%
More precisely, consider a fixed value of $\theta ^{(j)}$ drawn from some
variational approximation of $\pi (\theta |y_{1}^{n})$, call it $\widehat{q}%
_{\theta }$. Given the realization $\theta ^{(j)}$, we simulate $%
x_{1}^{n}|y_{1}^{n},\theta ^{(j)}$ from the VB approximation of the states $%
\widehat{q}_{x}$. Next, given $x_{n}^{(j)}\sim \widehat{q}_{x}$, we can
generate $x_{n+1}^{(j)}$ from $p(x_{n+1}|x_{n}^{(j)},y_{1}^{n},\theta
^{(j)}) $ by generating from the transition density of the states, $%
x_{n+1}^{(j)}\sim \chi _{\theta }(x_{n+1},x_{n}^{(j)})$, and under the draws 
$x_{n}^{(j)}$ and $\theta ^{(j)}$. Lastly, $Y_{n+1}^{(j)}$ is generated
according to the conditional distribution $Y_{n+1}^{(j)}\sim g_{\theta
}(y_{n+1}|x_{n+1}^{(j)})$. While the above steps are simple to implement,
the critical point to realize is that since $x_{n}^{(j)}$ has not been
generated from $p(x_{1}^{n}|y_{1}^{n},\theta )$, in general $x_{n+1}^{(j)}$
is not a draw from $p(x_{n+1}|x_{n},y_{1}^{n},\theta
)p(x_{1}^{n}|y_{1}^{n},\theta )$. Hence, the draw $Y_{n+1}^{(j)}$ does not
correctly reflect the structure of the assumed model, and $Y_{n+1}^{(j)}$
cannot be viewed as being a draw from the exact predictive density in %
\eqref{eq:pred}.

Notable uses of the above approach to prediction appear in \cite%
{quiroz2018gaussian}, \cite{koop2018variational} and \cite{chan2020fast}.
While similar in form and structure, {these }three specific approaches are
distinct in the sense that the each use different methods to construct $%
\widehat{q}_{x}$ ({in addition to the differences in the construction of }$%
\widehat{q}_{\theta }$) and thus {to generate} $x_{n+1}^{(j)}$.

\subsubsection{Simulation approaches\label{sim}}

As an alternative, one may {estimate }$p(Y_{n+1}|y_{1}^{n})$ using {exact }%
draws of $Y_{n+1}$, $x_{n+1}$ and $x_{1}^{n}$, conditional on the draw of $%
\theta $ from some $\widehat{q}_{\theta }.$ {For example, if draws from the
	exact posterior of the states, }$p(x_{1}^{n}|y_{1}^{n},\theta )$, {are
	readily available via an efficient MCMC algorithm, }$p(Y_{n+1}|y_{1}^{n})$
can {be estimated via the same set of steps as delineated above, apart from }%
$x_{n}^{(j)}$ {being drawn directly from }$p(x_{1}^{n}|y_{1}^{n},\theta )$, {%
	rather than some }$\widehat{q}_{x}${; see, for example, \cite{loaiza2020fast}%
	. In this case, }$x_{n+1}^{(j)}$ {is a draw from }$%
p(x_{n+1}|x_{n},y_{1}^{n},\theta )p(x_{1}^{n}|y_{1}^{n},\theta )$ and,
consequently, the draw $Y_{n+1}^{(j)}$ correctly reflects the model
structure. Moreover, due to the Markovian nature of \eqref{eq:HMM_1},
posterior {draws of the full vector of states }$x_{1}^{n}$ {are not
	required, only draws of }$x_{n}.$ {As such,} any forward (particle)
filtering method is all that is required to produce draws of $x_{n}$ that
are conditional on the full vector of observations.
\section{Computational details and additional results: numerical exercise}
\subsection{Computational details: methods}\label{sec:compdets}
This section contains detailed discussions on the computational methods used in the numerical examples in Section \ref{vi} of the main paper.
\subsubsection{Exact Bayes}

Denote the two vectors of latent variables as $\mu _{1}^{n}=\left( \mu
_{1},\dots ,\mu _{n}\right) ^{\prime }$ and $h_{1}^{n}=\left( h_{1},\dots
,h_{n}\right) ^{\prime }$. The exact posterior density is given as 
\begin{equation}
\pi (\theta ,\mu _{1}^{n},h_{1}^{n}|y_{1}^{n})=\frac{p(y_{1}^{n}|\mu
	_{1}^{n},h_{1}^{n},\theta )p(\mu _{1}^{n},h_{1}^{n}|\theta )p(\theta )}{%
	p(y_{1}^{n})},  \label{eq:postsucsv}
\end{equation}%
with prior $p(\theta )=p(\bar{\mu})p({\rho _{\mu }})p({\sigma _{\mu }})p(%
\bar{h})p({\rho _{h}})p({\sigma _{h}})$, where $\bar{\mu}\sim N(0,1000)$, ${%
	\rho _{\mu }}\sim \text{U}(0,1)$, ${\sigma _{\mu }}^{2}\sim \text{IG}%
(1.001,1.001)$, $\bar{h}\sim N(0,1000)$, ${\rho _{h}}\sim \text{U}(0,1)$ and 
${\sigma _{h}}^{2}\sim \text{IG}(1.001,1.001)$. We draw from %
\eqref{eq:postsucsv} using an MCMC algorithm. Specifically, the vector $%
h_{1}^{n}$ is generated using the method proposed in \cite{primiceri2005time}%
, while $\mu _{1}^{n}$ is generated using the forward-filtering
backward-sampling method in \cite{carter1994gibbs}. Given the choice of
priors, the parameters $\bar{\mu}$, ${\sigma _{\mu }}$, $\bar{h}$ and ${%
	\sigma _{h}}$ can be generated directly using Gibbs steps. The parameters ${%
	\rho _{\mu }}$ and ${\rho _{h}}$ are generated using a Metropolis-Hastings
step with a Gaussian proposal distribution. The corresponding predictive
(expressed using obvious notation), 
\begin{equation}
p(Y_{n+1}|y_{1}^{n})=\int_{\Theta }\int_{\mathcal{H}}\int_{\mathcal{M}%
}g_{\theta }(Y_{n+1}|\mu _{n+1},h_{n+1})p(\mu _{n+1},h_{n+1}|\theta ,\mu
_{1}^{n},h_{1}^{n})\pi (\theta ,\mu _{1}^{n},h_{1}^{n}|y_{1}^{n})\dt\mu
_{1}^{n}\dt h_{1}^{n}\dt\theta ,  \label{exact_pred}
\end{equation}%
is then estimated (via kernel density methods) using the draws of $Y_{n+1}$
obtained conditional on the draws of $\theta ,$ $\mu _{1}^{n}$ and $%
h_{1}^{n}.$

\subsubsection{Quiroz et al. (2018)}

Re-cast in terms of our simulation design, \cite{quiroz2018gaussian} (QNK
hereafter) adopt the variational approximation: 
\begin{equation}
q_{\widehat{\lambda }}(\theta ,\mu _{1}^{n},h_{1}^{n})=q_{\widehat{\lambda }%
	_{1}}(\theta )q_{\widehat{\lambda }_{2}}(x_{1}^{n}),  \label{eq:gfb}
\end{equation}%
where $\widehat{\lambda }=\left( \widehat{\lambda }_{1}^{\prime },\widehat{%
	\lambda }_{2}^{\prime }\right) ^{\prime }$, $x_{t}=\left( \mu
_{t},h_{t}\right) ^{\prime }$ and $x_{1}^{n}=\left( x_{1}^{^{\prime }},\dots
,x_{n}^{^{\prime }}\right) ^{^{\prime }}$. The approximations $q_{\widehat{%
		\lambda }_{1}}(\theta )$ and $q_{\widehat{\lambda }_{2}}(x_{1}^{n})$ are
optimal elements in the variational classes $\mathcal{Q}_{1}=\{q_{\lambda
	_{1}}(\theta ):\lambda _{1}\in \Lambda _{1}\}$ and $\mathcal{Q}%
_{2}=\{q_{\lambda _{2}}(\theta ):\lambda _{2}\in \Lambda _{2}\}$,
respectively, where the optimization is performed using a stochastic
gradient ascent (SGA) algorithm \citep{bottou2010large}, and the
approximation is based on the same prior as specified above. The elements of
the first class are Gaussian densities of the form $q_{\lambda _{1}}(\theta
)=\phi _{6}\left( \theta ;\nu _{\theta },BB^{^{\prime }}+\text{diag}%
(d^{2})\right) $, while the elements of the second class are of the form $%
q_{\lambda _{2}}(x_{1}^{n})=$ $\phi _{2n}\left( x_{1}^{n};\nu
_{x},CC^{^{\prime }}\right) $, where $C$ is a three diagonal lower
triangular matrix, and the subscript on the symbol for the normal pdf, $\phi 
$, denotes the dimension of the density. (For more details on this
approximating class see \citealp{ong2018gaussian}.) Replacing $\pi (\theta
,\mu _{1}^{n},h_{1}^{n}|y_{1}^{n})$ in (\ref{exact_pred}) by the
approximation\ in (\ref{eq:gfb}), the predictive density is then estimated
as described in Section \ref{approx}.

\subsubsection{Loaiza-Maya et al. (2021)}

Once again translating their method into our setting, \cite{loaiza2020fast}
(LSND hereafter), in contrast to \cite{quiroz2018gaussian}, adopt a
variational approximation for $\pi (\theta |y_{1}^{n})$ only, exploiting the
exact conditional posterior density of the states, $p\left( \mu
_{1}^{n},h_{1}^{n}|y_{1}^{n},\theta \right) .$\ As such, the variational
approximation takes the form: 
\begin{equation}
q_{\widehat{\lambda }}(\theta ,\mu _{1}^{n},h_{1}^{n}|y_{1}^{n})=q_{\widehat{%
		\lambda }}(\theta )p\left( \mu _{1}^{n},h_{1}^{n}|y_{1}^{n},\theta \right) ,
\label{eq:hvbsucsv}
\end{equation}%
where $q_{\widehat{\lambda }}(\theta )$ is an optimal element in the
variational class $\mathcal{Q}=\{q_{\lambda }(\theta ):\lambda \in \Lambda
\} $, once again found via SGA. For $\mathcal{Q}$ the class of multivariate
Gaussian densities with a factor structure is employed, so that $q_{\lambda
}(\theta )=\phi _{6}\left( \theta ;\nu ,BB^{^{\prime }}+\text{diag}%
(d^{2})\right) $, and $\lambda =\left( \nu ^{\prime },\text{ vec}(B)^{\prime
},d^{^{\prime }}\right) ^{\prime }$. Replacing $\pi (\theta ,\mu
_{1}^{n},h_{1}^{n}|y_{1}^{n})$ in (\ref{exact_pred}) by the approximation in
(\ref{eq:hvbsucsv}) (once again, with the same underlying prior adopted),
the predictive density is then estimated as described in Section \ref{sim}.
Generation from $p\left( \mu _{1}^{n},h_{1}^{n}|y_{1}^{n},\theta \right) $
is achieved via an MCMC algorithm that sequentially draws from $p\left( \mu
_{1}^{n}|y_{1}^{n},\theta ,h_{1}^{n}\right) $ using the method in \cite%
{carter1994gibbs}; and then draws from $p\left( h_{1}^{n}|y_{1}^{n},\theta
,\mu _{1}^{n}\right) $ using the approach in \cite{primiceri2005time}.

\subsubsection{Chan and Yu (2020)}\label{sec:cy2020}

The final VB method we consider is that of \cite{chan2020fast} (CY
hereafter). This approach has been designed specifically for (vector)
autoregressive models with stochastic volatility (SV) and not for the UCSV
model in (\ref{1}).\ The SV component(s) is (are) assumed to have random
walk dynamics, which are factored into the construction of the VB
approximation for the states. In the case of a scalar random variable (and
volatility state) the assumed structure is:%
\begin{equation*}
h_{t}=h_{t-1}+\sigma _{h}\eta _{t},\;\;Y_{t}=\exp (h_{t}/2)u_{t}.
\end{equation*}%
Denoting by $h_{0}$ the initial condition of the states, and defining $%
\theta =(\sigma _{h},h_{0})^{\prime }$, CY construct an approximation to the
exact posterior $p(\theta ,h_{1}^{n}|y_{1}^{n})$ as: 
\begin{equation*}
q_{\widehat{\lambda }}(\theta ,h_{1}^{n})=q_{\widehat{\lambda }_{1}}(\sigma
_{h}^{2})q_{\widehat{\lambda }_{2}}(h_{0})q_{\widehat{\lambda }%
	_{3}}(h_{1}^{n}),
\end{equation*}%
where $q_{\widehat{\lambda _{1}}}(\sigma _{h}^{2})$, $q_{\widehat{\lambda
		_{2}}}(h_{0})$ and $q_{\widehat{\lambda _{3}}}(h_{1}^{n})$ are optimal
elements in the variational classes $\mathcal{Q}_{1}=\{q_{\lambda
	_{1}}(\sigma _{h}^{2}):\lambda _{1}\in \Lambda _{1}\}$, $\mathcal{Q}%
_{2}=\{q_{\lambda _{2}}(h_{0}):\lambda _{2}\in \Lambda _{2}\}$ and $\mathcal{%
	Q}_{3}=\{q_{\lambda _{3}}(h_{1}^{n}):\lambda _{3}\in \Lambda _{3}\}$,
respectively. The elements of each class are defined respectively as $%
q_{\lambda _{1}}(\sigma _{h}^{2})=\mathcal{IG}(\sigma _{h}^{2};\nu ,S)$, $%
q_{\lambda _{2}}(h_{0})=\phi _{1}(h_{0};\mu _{0},s_{0}^{2})$ and $q_{\lambda
	_{3}}(h_{1}^{n})=\phi _{n}(h_{1}^{n};m,\hat{K}^{-1})$. The variational
parameters $\lambda _{1}=\left( \nu ,S\right) ^{^{\prime }}$, $\lambda
_{2}=\left( \mu _{0},s_{0}^{2}\right) ^{^{\prime }}$ and $\lambda _{3}=m$,
are calibrated to produce the elements in $\mathcal{Q}_{1}$, $\mathcal{Q}%
_{2} $ and $\mathcal{Q}_{3}$ that minimise the KL divergence from $p(\theta
,h_{1}^{n}|y_{1}^{n})$. The authors use a coordinate ascent algorithm %
\citep{blei2017variational} to perform the optimization, while the value of $%
\hat{K}^{-1}$ can be optimally computed as a deterministic function of $%
\lambda _{1}$, $\lambda _{2}$, $\lambda _{3}$ and $y_{1}^{n}$. In our
implementation of the CY method, the priors are set to $p(\sigma _{h}^{2})=%
\mathcal{IG}(\sigma _{h}^{2};1.001,1.001)$ and $p(h_{0})=\phi _{1}\left(
h_{0};0,1000\right) $, where $\mathcal{IG}$ denotes the inverse gamma
distribution. The predictive density is then estimated as described in
Section \ref{approx}, with $h_{1}^{n}$\ playing the role of $x_{1}^{n}$
therein.

\subsection{Additional numerical results}\label{app:results}

\subsubsection{State inference}

This section contains additional details for the inferential state
comparison given in Section \ref{sec: stat_inf}. Firstly, the computation
times for the different VB methods used in this section are given in Table %
\ref{tab:comp}, and demonstrate that exact Bayes and the method of \cite%
{loaiza2020fast} are comparable in terms of computational cost across the
different simulation designs. In contrast, the method of \cite%
{quiroz2018gaussian} takes roughly twice as long to implement. 
\begin{table}[!htbp]
	\caption{Estimation time (in seconds) required to estimate the UCSV model on
		a sample of 11000 time points.}
	\label{tab:comp}\centering{\footnotesize \ 
		\begin{tabular}{lrrr}
			\hline\hline
			& \multicolumn{3}{c}{Estimation times in seconds} \\ 
			&  &  &  \\ 
			& DGP 1 & DGP 2 & DGP 3 \\ \cline{2-4}
			&  &  &  \\ 
			Exact Bayes & 69.1643 & 66.0900 & 71.9203 \\ 
			LSND & 67.4664 & 68.5716 & 71.6248 \\ 
			QNK & 125.3298 & 130.4489 & 120.8658 \\ \hline\hline
		\end{tabular}
	}
\end{table}

Figures \ref{fig:states_post_meanDGP1} and \ref{fig:states_post_meanDGP3}
plot the posteriors for the unknown states under DGP 1 and 3; see Section %
\ref{sec: stat_inf} for details regarding the production and interpretation
of these plots. Comparable to the results under DGP 2 in Figure \ref%
{fig:states_post_mean}, we see that the method of \cite{quiroz2018gaussian}
performs the worst in terms of state inference for $\exp (h_{t}/2)$ across
both DGPs, while the method of \cite{loaiza2020fast} performs similarly to
exact Bayes. For inference on the time-varying mean, $\mu _{t}$, all methods
perform better in general than under DGP 2; in particular, the method of 
\cite{quiroz2018gaussian} {appears (visually) to produce} much more
accurate inferences under DGPs 1 and 3\textbf{\ }than in the case of DGP2. 
\begin{figure}[!htbp]
	\begin{center}
		\includegraphics*[scale = 0.7]{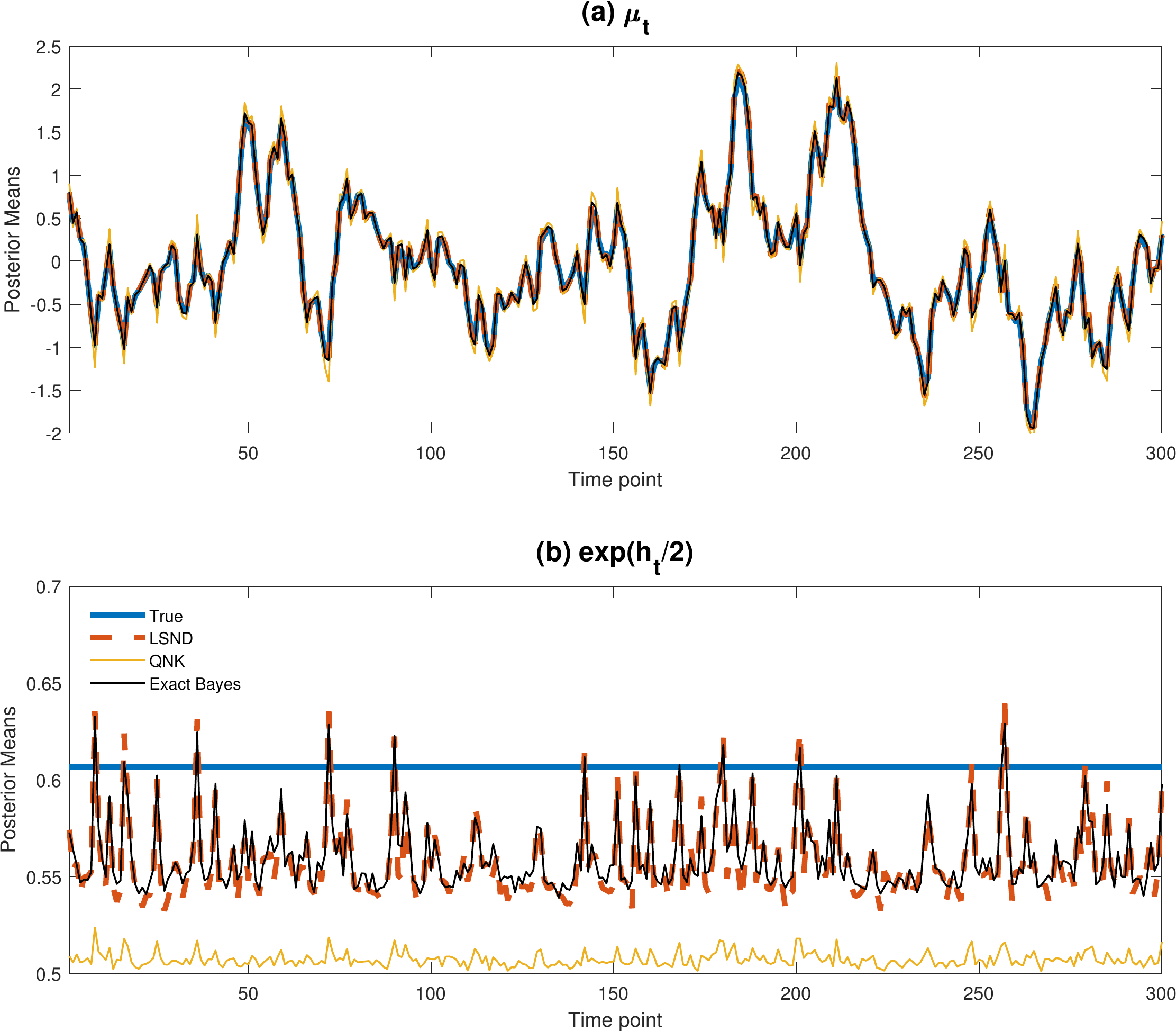}
	\end{center}
	\caption{Posterior means of the latent states, under DGP 1, over the first
		300 time points. Panel (a) plots the posterior mean for $\protect\mu _{t}$.
		The red, gold and black lines plot, respectively, the posterior means based
		on the LSND, QNK and exact Bayes approaches. The blue line plots the
		posterior mean that conditions on the true parameters. In the case of the
		conditional standard deviation, $\exp (h_{t}/2)$, conditional on $\protect%
		\theta _{0}$, $\exp (h_{t}/2)$ is constant, and, thus, so is the posterior
		mean. Panel (b) presents corresponding results for $\exp (h_{t}/2)$.}
	\label{fig:states_post_meanDGP1}
\end{figure}

\begin{figure}[!htbp]
	\begin{center}
		\includegraphics*[scale = 0.7]{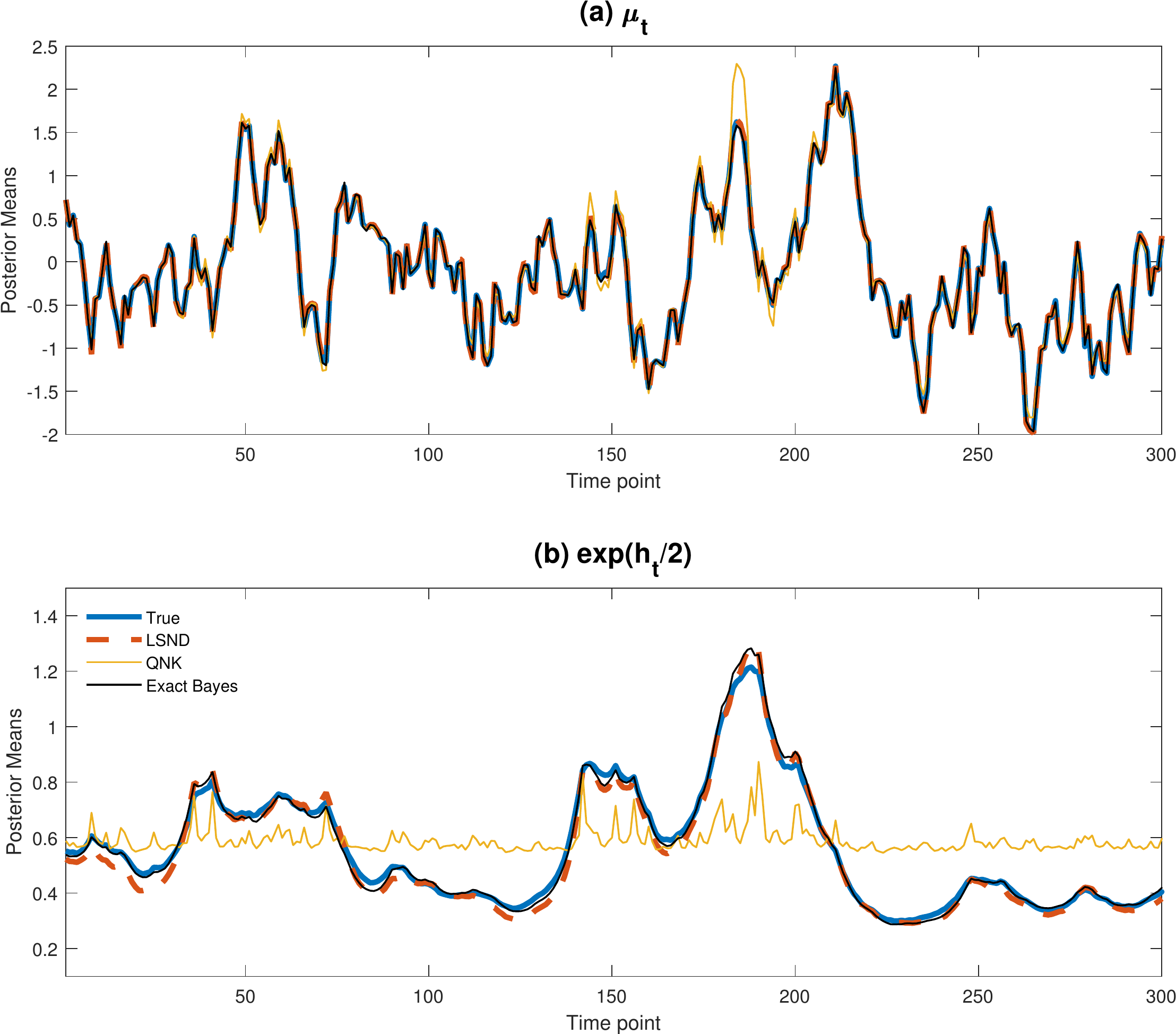}
	\end{center}
	\caption{Posterior means of the latent states, under DGP 3, over the first
		300 time points. Panel (a) plots the posterior mean for $\protect\mu _{t}$.
		The red, gold and black lines plot, respectively, the posterior means based
		on the LSND, QNK and exact Bayes approaches. The blue line plots the
		posterior mean that conditions on the true parameters. Panel (b) presents
		corresponding results for $\exp (h_{t}/2)$.}
	\label{fig:states_post_meanDGP3}
\end{figure}

\subsubsection{Predictive performance}
Herein, we present the results of our predictive analysis for the additional samples sizes referenced in Section \ref{sec:comp} of the main paper. Table \ref{tab:predacc100} contains results for 100 out-of-sample evaluations and Table \ref{tab:predacc1000} results for 1000 out-of-sample evaluations.

\begin{table}[!htbp]
	\caption{Predictive performance of competing Bayesian approaches: exact
		Bayes, LSND and CY and QNK. The column labels indicate the out-of-sample
		predictive performance measure while the row labels indicate the predictive
		method. `True DGP' indicates the productive results that condition on the
		true parameters. Panels A, B and C correspond to the results for DGP 1, 2
		and 3, respectively. The average predictive measures in this table were
		computed using 100 out-of-sample evaluations.}\label{tab:predacc100}\centering{\footnotesize
		\begin{tabular}{lcccccccc}
			\hline\hline
			\textbf{Panel A: DGP{\ 1}}		& {LS} & {CLS-10\%} & {CLS-20\%} & {CLS-80\%} & {CLS-90\%} & {CRPS} & {\
				TWCRPS } & {MSIS} \\ \cline{2-9}
			True DGP & -1.206 & -0.410 & -0.647 & -0.348 & -0.166 & -0.453 & -0.134 & 
			-3.849 \\ 
			Exact Bayes & -1.210 & -0.416 & -0.651 & -0.349 & -0.166 & -0.456 & -0.136 & 
			-3.894 \\ 
			LSND & -1.210 & -0.415 & -0.648 & -0.352 & -0.170 & -0.455 & -0.135 & -3.929
			\\ 
			CY & - & - & - & - & - & - & - & - \\ 
			QNK & -1.212 & -0.419 & -0.649 & -0.354 & -0.170 & -0.454 & -0.136 & -3.999
			\\ \hline
			{\textbf{Panel B: DGP{\ 2}}} 		& {LS} & {CLS-10\%} & {CLS-20\%} & {CLS-80\%} & {CLS-90\%} & {CRPS} & {\
				TWCRPS } & {MSIS} \\ \cline{2-9}
			True DGP & -1.132 & -0.295 & -0.488 & -0.568 & -0.332 & -0.428 & -0.128 & 
			-4.417 \\ 
			Exact Bayes & -1.129 & -0.300 & -0.495 & -0.551 & -0.315 & -0.429 & -0.128 & 
			-4.339 \\ 
			LSND & -1.144 & -0.294 & -0.489 & -0.570 & -0.333 & -0.430 & -0.128 & -4.250
			\\ 
			CY & -1.159 & -0.314 & -0.506 & -0.564 & -0.321 & -0.432 & -0.130 & -4.583
			\\ 
			QNK & -1.151 & -0.311 & -0.504 & -0.562 & -0.322 & -0.432 & -0.129 & -4.762
			\\ \hline
			{\textbf{Panel C: DGP{\ 2}}} 			& {LS} & {CLS-10\%} & {CLS-20\%} & {CLS-80\%} & {CLS-90\%} & {CRPS} & {\
				TWCRPS } & {MSIS} \\ \cline{2-9}
			True DGP & -1.182 & -0.391 & -0.569 & -0.365 & -0.197 & -0.452 & -0.132 & 
			-4.665 \\ 
			Exact Bayes & -1.185 & -0.401 & -0.576 & -0.351 & -0.187 & -0.451 & -0.133 & 
			-4.632 \\ 
			LSND & -1.188 & -0.399 & -0.574 & -0.359 & -0.197 & -0.452 & -0.133 & -4.634
			\\ 
			CY & - & - & - & - & - & - & - & - \\ 
			QNK & -1.218 & -0.415 & -0.605 & -0.368 & -0.201 & -0.455 & -0.135 & -5.119
			\\ \hline\hline
		\end{tabular}%
	}
\end{table}

\begin{table}[!htbp]
	\caption{Predictive performance of competing Bayesian approaches: exact
		Bayes, LSND and CY and QNK. The column labels indicate the out-of-sample
		predictive performance measure while the row labels indicate the predictive
		method. `True DGP' indicates the productive results that condition on the
		true parameters. Panels A, B and C correspond to the results for DGP 1, 2
		and 3, respectively. The average predictive measures in this table were
		computed using 1000 out-of-sample evaluations.}\label{tab:predacc1000}\centering{\footnotesize
		\begin{tabular}{lcccccccc}
			\hline\hline
			{\textbf{Panel A: DGP{\ 1}}} 			& {LS} & {CLS-10\%} & {CLS-20\%} & {CLS-80\%} & {CLS-90\%} & {CRPS} & {\
				TWCRPS } & {MSIS} \\ \cline{2-9}
			True DGP & -1.243 & -0.320 & -0.524 & -0.495 & -0.273 & -0.473 & -0.143 & 
			-3.928 \\ 
			Exact Bayes & -1.246 & -0.319 & -0.524 & -0.497 & -0.273 & -0.474 & -0.143 & 
			-3.935 \\ 
			LSND & -1.247 & -0.319 & -0.522 & -0.498 & -0.274 & -0.474 & -0.143 & -3.942
			\\ 
			CY & - & - & - & - & - & - & - & - \\ 
			QNK & -1.246 & -0.319 & -0.523 & -0.498 & -0.273 & -0.474 & -0.143 & -3.934
			\\ \hline
			{\textbf{Panel B: DGP{\ 2}}} 			& {LS} & {CLS-10\%} & {CLS-20\%} & {CLS-80\%} & {CLS-90\%} & {CRPS} & {\
				TWCRPS } & {MSIS} \\ \cline{2-9}
			True DGP & -1.261 & -0.379 & -0.595 & -0.593 & -0.403 & -0.482 & -0.146 & 
			-4.477 \\ 
			Exact Bayes & -1.268 & -0.382 & -0.600 & -0.593 & -0.405 & -0.483 & -0.146 & 
			-4.504 \\ 
			LSND & -1.273 & -0.382 & -0.600 & -0.598 & -0.409 & -0.484 & -0.146 & -4.531
			\\ 
			CY & -1.284 & -0.386 & -0.604 & -0.601 & -0.410 & -0.486 & -0.147 & -4.623
			\\ 
			QNK & -1.274 & -0.385 & -0.602 & -0.596 & -0.409 & -0.484 & -0.146 & -4.597
			\\ \hline
			{\textbf{Panel C: DGP{\ 3}}} 			& {LS} & {CLS-10\%} & {CLS-20\%} & {CLS-80\%} & {CLS-90\%} & {CRPS} & {\
				TWCRPS } & {MSIS} \\ \cline{2-9}
			True DGP & -1.329 & -0.324 & -0.548 & -0.558 & -0.313 & -0.517 & -0.156 & 
			-4.752 \\ 
			Exact Bayes & -1.334 & -0.327 & -0.551 & -0.557 & -0.312 & -0.518 & -0.157 & 
			-4.746 \\ 
			LSND & -1.338 & -0.328 & -0.552 & -0.559 & -0.314 & -0.518 & -0.157 & -4.773
			\\ 
			CY & - & - & - & - & - & - & - & - \\ 
			QNK & -1.345 & -0.332 & -0.559 & -0.566 & -0.322 & -0.519 & -0.157 & -4.987
			\\ \hline\hline
		\end{tabular}%
	}
\end{table}

\subsection{Additional details: scoring rules}

\label{sect:scoring rules} In the simulation exercises we have considered
five different forms of positively-oriented scoring rules to measure
predictive accuracy. To express each of these scoring rules, denote as $%
P\left( Y_{n+1}|y_{1}^{n}\right) $ the predictive distribution associated
with the Bayesian predictive density $p(Y_{n+1}|y_{1}^{n})$.

The first scoring rule that we consider is the logarithmic score (LS), which
is given {by} 
\begin{equation}
S_{\text{LS}}(P\left( Y_{n+1}|y_{1}^{n}\right) ,y_{n+1})=\ln p\left(
y_{n+1}|y_{1}^{n}\right) .  \label{ls_prelim}
\end{equation}%
This score is favourable to predictive distributions that assign high\textbf{%
	\ }probability mass to the realised value $y_{n+1}$.

The second type of scoring rule that we consider is the censored logarithm
score (CS) introduced by \cite{diks2011likelihood}. This rule is defined as 
\begin{equation}
S_{\text{CS}}(P\left( Y_{n+1}|y_{1}^{n}\right) ,y_{n+1})=\ln p\left(
y_{n+1}|y_{1}^{n}\right) I\left( y_{n+1}\in A\right) +\left[ \ln
\int_{A^{c}}p\left( y|y_{1}^{n}\right) dy\right] I\left( y_{n+1}\in
A^{c}\right) .  \label{csr_prelim}
\end{equation}%
This score rewards predictive accuracy over the region of interest $A$ ({with%
} $A^{c}${\ indicating the complement of this region). Here we report
	results solely for }$A$ {defining} the lower {and upper tail of the
	predictive distribution, as determined respectively by the }10\%, 20\%, 80\%
and 90\% quantiles{\ of the empirical distribution of }$y_{t}${. We label
	these scores as {CS-10\%}, {CS-20\%}, {CS-80\%} and CS-90\% }.

The third scoring rule is the continuously ranked probability score (CRPS)
proposed by \cite{gneiting2007strictly} and defined as%
\begin{equation}
S_{\text{CRPS}}\left[ P(Y_{n+1}|y_{1}^{n}),y_{n+1}\right] =-\int_{-\infty
}^{\infty }\left[ P\left( y|y_{1}^{n}\right) -I(y\geq y_{n+1})\right] ^{2}dy.
\label{Eq:CRPSloss}
\end{equation}%
The CRPS is sensitive to distance,\ rewarding the assignment of high
predictive mass near to the realised value of $y_{n+1}$.

The fourth scoring rule is the left tail weighted CRPS (TWCRPS) proposed in 
\cite{gneiting2011comparing}, which is defined as 
\begin{equation}
S_{\text{TWCRPS}}\left[ P(Y_{n+1}|y_{1}^{n}),y_{n+1}\right] =-\int_{0}^{1}2%
\left[ I(P^{-1}(\alpha |y_{1}^{n})\geq y_{n+1})-\alpha \right] \left[
P^{-1}(\alpha |y_{1}^{n})-y_{n+1}\right] (1-\alpha )^{2}d\alpha .
\label{Eq:TWCRPSloss}
\end{equation}%
This score penalises more heavily longer distances to realised values that
are observed in the left tail.

The last score that we consider is the interval score (IS) proposed in \cite%
{gneiting2007strictly}. The IS formula is defined over the $100\left(
1-\alpha \right) $\% prediction interval, and given by 
\begin{flalign*}
S_{\text{IS}}\left[ P(Y_{n+1}|y_{1}^{n}),y_{n+1}\right] =&-\bigg{\{}
u_{n+1}-l_{n+1}+\frac{2}{\alpha }\left( l_{n+1}-y_{n+1}\right) \boldsymbol{1}%
\{y_{n+1}<l_{n+1}\}\\&+\frac{2}{\alpha }\left( y_{n+1}-u_{n+1}\right) 
\boldsymbol{1}\{y_{n+1}>u_{n+1}\}\bigg{\}} ,
\end{flalign*}%
where $l_{n+1}$ and $u_{n+1}$ denote the $100\left( \frac{\alpha }{2}\right) 
$\% and $100\left( 1-\frac{\alpha }{2}\right) $\% predictive quantile,
respectively. This score rewards high predictive accuracy of the $100\left(
1-\alpha \right) $\% predictive interval with $0<\alpha <1$. In this paper
we set $\alpha =0.05$.

%We formalize this model through the following assumption. 
%\begin{assumption}
%	\label{ass:hmm}
%	\begin{enumerate}
%		\item[(i)] The observed data $\{Y_t\}_{t\le n}$ constitutes a sequence that has been generate from equation \eqref{eq:HMM} for some unknown value of the parameters $\theta_0\in\Theta$. 
%		\item[(ii)] $\{(X_t,Y_t)\}_{1\le t \le n}$ in \eqref{eq:HMM} is a time-homogeneous Markov process with parameter $\theta$ equipped with initial probability measure $\nu<\infty$ and transition Kernel $Q_{\theta}$ of $\{X_t\}_{t=1}^n$.
%		\item[(iii)] The uniformly bounded kernel $Q_{\theta}$ possesses its unique invariant probability measure $\mu$ on $(X,\mathcal{X})$ and it is aperiodic positive Harris recurrent and uniformly ergodic for $\theta\in \Theta$. 
%		\item[(iv)] For any $\theta\neq \theta_{\star}$, there is a neighborhood $\mathcal{E}_{\theta}$ of $\theta$ defined as $\mathcal{E}_{\theta}:=\{\theta':\|\theta'-\theta\|<\epsilon\}$ such that the conditional density $g_{\theta'}(x,y)$ satisfies $\mathbb{E}_{\theta_\star}\left[\sup_{\theta'\in\mathcal{E}_{\theta}\sup_{x\in\mathcal{X}}}(\log g_{\theta'}(x,y_0))\right]<\infty$.
%		\item[(v)] For $n\ge k(\theta)$ and any $\theta\neq \theta_{\star}$ such that $p^{\nu}(y_{1:k(\theta)};\theta)>0$,  
%		\[
%		\liminf_{n\rightarrow\infty}P^{\mu}_{\theta^{\star}}(Y_{1:n}\in A_n)>0,\text{}\limsup_{n\rightarrow\infty}n^{-1}\log P^{\nu}_{\theta}(Y_{1:n}\in A_n)<0
%		\]
%		for some suitable sequence of sets $A_n$.
%	\end{enumerate}	
%\end{assumption}
\section{Technical results}
\subsection{Proofs of main results}

\begin{proof}[Proof of Lemma \ref{lem:first}]
	The proof follows a modification of the standard arguments; see, e.g., Theorem 3.2 in \cite{pakes1989simulation}. Fix $\epsilon>0$. By continuity of $\theta\mapsto H(\theta)$, there exists $\delta>0$ such that
	$$
	\text{Pr}\left[d(\widehat\theta_n,\theta_0)\ge\epsilon\right]\le \text{Pr}\left[H(\theta_0)-H(\widehat\theta_n)\ge\delta\right],
	$$	where $H(\theta)=\plim_n \ell_n(\theta)$, $\ell_n(\theta)=\frac{1}{n}\log p_\theta(\y)$ and $\theta_0$ satisfies $H(\theta_0)\ge\sup_{\theta\in\Theta}H(\theta)$. The stated result then follows if the RHS is $o(1)$. Since $\hat\kappa_n:=\Upsilon_n(\widehat\theta_n)/n\ge0$ for all $n$,
	\begin{flalign*}
	H(\theta_0)-H(\widehat\theta_n)&\leq H(\theta_0)-H(\widehat\theta_n)+\hat\kappa_n\le 2\sup_{\theta\in\Theta}|H(\theta)-\ell_n(\theta)|+\ell_n(\theta_0)-\ell_n(\widehat\theta_n)+\hat\kappa_n
	\end{flalign*}By Assumption \ref{ass:mle}, the first term is $o_p(1)$, and we can concentrate on the second term. From the definition of $\widehat\theta_n$, and since $0<p(\theta)<\infty$ for all $\theta$,
	\begin{flalign*}
	\left[\ell_n(\theta_0)-\kappa_n+\frac{1}{n}\log p(\theta_0)\right]=	\left[\ell_n(\theta_0)-\kappa_n\right]+o(1)\le\left[\ell_n(\widehat\theta_n)-\hat\kappa_n+\frac{1}{n}\log p(\widehat\theta_n)\right]=\left[\ell_n(\widehat\theta_n)-\hat\kappa_n\right]+o(1).
	\end{flalign*}Therefore,  
	\begin{flalign*}
	\ell_n(\theta_0)-\ell_n(\widehat\theta_n)+\hat\kappa_n&=	[\ell_n(\theta_0)-\kappa_n]-[\ell_n(\widehat\theta_n)-\hat\kappa_n]+\kappa_n\le o(1)+\kappa_n.
	\end{flalign*}Conclude that $H(\theta_0)-H(\widehat\theta_n)\leq o_p(1)$ if $\kappa_n=o_p(1)$.
\end{proof}

\begin{proof}[Proof of Lemma \ref{lem:second}]
	The proof follows along the same lines used to prove results for generalized posteriors. See, in particular, \cite{chernozhukov2003mcmc}, \cite{miller2019asymptotic}, and \cite{syring2020gibbs}. 
	
	Define $\Pi_n(\Theta):=\int_\Theta \exp\left\{\widehat{{L}}_n(\theta)\right\}p(\theta)\dt\theta$ and recall that, by hypothesis, for all $n\ge1$, $\Pi_n(\Theta)<\infty$. Fix $\epsilon>0$, and let $A_\epsilon:=\{\theta:d(\theta,\theta_\star)>\epsilon\}$. For any $\delta>0$, 
	\begin{flalign*}
	\widehat{Q}(A_\epsilon|\y)&=\frac{\Pi_n(A_\epsilon)}{\Pi_n(\Theta)}=\frac{\Pi_n(A_\epsilon)\exp\left\{-\widehat{L}(\theta_\star)+n\delta\right\}}{\Pi_n(\Theta)\exp\left\{-\widehat{L}_n(\theta_\star)+n\delta\right\}}\\&=\frac{\int_{A_\epsilon}\exp\left\{-\widehat{L}(\theta_\star)+n\delta\right\}\exp\left\{\widehat{L}_n(\theta)\right\}p(\theta)\dt\theta}{\int_\Theta \exp\left\{-\widehat{L}_n(\theta_\star)+n\delta\right\}\exp\left\{\widehat{L}_n(\theta)\right\}p(\theta)\dt\theta}\\&=\frac{N_n}{D_n}.
	\end{flalign*}	
	We treat the numerator and denominator separately. 
	
	Write the numerator as
	\begin{flalign*}
	N_n=\int_{A_\epsilon}\exp\left\{n\left[\widehat{L}_n(\theta)/n-\widehat{L}_n(\theta_\star)/n+\delta\right]\right\}p(\theta)\dt\theta. 
	\end{flalign*}	Considering $\widehat{L}_n(\theta)/n-\widehat{L}_n(\theta_\star)/n$, we have that 
	\begin{flalign*}
	\widehat{L}_n(\theta)/n-\widehat{L}_n(\theta_\star)/n&\leq 2\sup_{\theta\in\Theta,\lambda\in\Lambda}|\mathcal{L}_n(\theta,\lambda)/n-\mathcal{L}(\theta,\lambda)|+\mathcal{L}[\theta,\widehat{\lambda}_n(\theta)]-\mathcal{L}[\theta_\star,\widehat{\lambda}_n(\theta_\star)]\\&\le o_p(1) +\left\{\mathcal{L}[\theta,\widehat{\lambda}_n(\theta)]-\mathcal{L}[\theta,\lambda(\theta)]\right\}-\left\{\mathcal{L}[\theta_\star,\widehat{\lambda}_n(\theta_\star)]-\mathcal{L}[\theta_\star,\lambda(\theta_\star)]\right\}\\&+\mathcal{L}[\theta,\lambda(\theta)]-\mathcal{L}[\theta_\star,\lambda(\theta_\star)]\\&\le o_p(1)+\mathcal{L}[\theta,\lambda(\theta)]-\mathcal{L}[\theta_\star,\lambda(\theta_\star)]\\&\le o_p(1)-\delta
	\end{flalign*}
	where the first inequality follows from the triangle inequality, the second from Assumption \ref{ass:post}(2.b), and the third follows from consistency of $\widehat{\lambda}_n(\theta)$, uniformly over $\theta$, Assumption \ref{ass:post} (1), and the last follows from the identification condition in Assumption \ref{ass:post}(2.a). Thus for any $\epsilon>0$,  
	$$
	\liminf_{n\rightarrow\infty}\text{Pr}\left[\sup_{\theta:d(\theta,\theta_\star)>\epsilon}\frac{1}{n}\left\{\widehat{L}_n(\theta)-\widehat{L}_n(\theta_\star)\right\}\le-\delta\right]=1.
	$$Therefore, for any $\theta\in A_\epsilon$, 
	$$
	\frac{1}{n}\left\{\widehat{L}_n(\theta)-\widehat{L}_n(\theta_\star)\right\}+\delta\le0,
	$$with probability converging to one (wpc1), so that for all $n$ large enough 
	$$
	\exp\left\{n\left[\widehat{L}_n(\theta)/n-\widehat{L}_n(\theta_\star)/n+\delta\right]\right\}\le 1.
	$$
	Consequently, wpc1, 
	\begin{flalign*}
	N_n=\int_{A_\epsilon}\exp\left\{n\left[\widehat{L}_n(\theta)/n-\widehat{L}_n(\theta_\star)/n+\delta\right]\right\}p(\theta)\dt\theta\leq \int_{A_\epsilon}p(\theta)\dt\theta\le1. 
	\end{flalign*}
	
	To bound the denominator, first define $L(\theta)=\mathcal{L}[\theta,\lambda(\theta)]$ and $G_\delta:=\{\theta:L(\theta)-L(\theta_\star)<-\delta/2\}$. For any $\theta\in G_\delta$, by Assumption \ref{ass:post}(2.b), 
	$$
	\left\{\widehat{L}_n(\theta)/n-\widehat{L}_n(\theta_\star)/n\right\}+\delta/2\rightarrow L(\theta)-L(\theta_\star)+\delta/2<0,
	$$wpc1. Thus, for any $\delta>0$ and any $\theta\in G_\delta$, 
	$
	\exp\left\{n\left[\widehat{L}_n(\theta)/n-\widehat{L}_n(\theta_\star)/n+\delta\right]\right\}\rightarrow\infty
	$ as $n\rightarrow\infty$ wpc1. By Fatou's Lemma
	\begin{flalign*}
	\liminf_{n\rightarrow\infty}\exp\left\{-\widehat{L}_n(\theta_\star)+n\delta\right\}\Pi_n(G_\delta)&=\liminf_{n\rightarrow\infty}\int_{G_\delta}\exp\left\{n\left[\widehat{L}_n(\theta)/n-\widehat{L}_n(\theta_\star)/n+\delta\right]\right\}p(\theta)\dt\theta\\&\ge\liminf_{n\rightarrow\infty}\exp\left[n\delta/4\right]\int_{G_\delta}p(\theta)\dt\theta.
	\end{flalign*}Since $\int_{G_\delta}p(\theta)\dt\theta>0$ for any $\delta>0$, by Assumption \ref{ass:post}(3), the term on the RHS of the inequality diverges as $n\rightarrow\infty$. Use the fact that 
	$$
	\Pi_n(\Theta)\ge \Pi_n(G_\delta)
	$$to deduce that $D_n\rightarrow\infty$ as $n\rightarrow\infty$ (wpc1). 
\end{proof}

\begin{proof}[Lemma \ref{lem:lgm}]
	The complete data likelihood is proportional to 
	\begin{flalign*} p(x_1^n, y_1^n| \theta)=& \{2 \pi \sigma_{0}^{2}\}^{-n} \exp \left\{-\frac{1}{2 \sigma_{0}^{2}} \sum_{k=1}^{n-1}\left(x_{k+1}-\rho x_{k}\right)^{2}-\frac{1}{2 \sigma_{0}^{2}} \sum_{k=1}^{n}\left(y_{k}-\alpha x_{k}\right)^{2}-\frac{1}{2 \sigma_{0}^{2}}\left(x_{1}\right)^{2}\right\}\\&=\{2\pi\sigma^2_0\}^{-n}\exp\left\{-\frac{1}{2\sigma_0^2}\left[(\x)'\Omega_n(\theta)\x-2\alpha(\y)'\x+(\y)'\y\right]\right\},
	\end{flalign*}for the matrix 
	$$
	\Omega_n(\theta):=\begin{pmatrix}
	(1+\rho^2+\alpha^2)&-\rho&0&\dots&0\\-\rho&(1+\rho^2+\alpha^2)&-\rho&\dots&0\\\vdots&\vdots&\vdots&\vdots&\vdots\\0&0&0&-\rho&(1+\alpha^2)
	\end{pmatrix}.
	$$
	The states can be analytically integrated out, using known results for multivariate normal integrals, to obtain the observed data likelihood $p(\y|\theta)$:
	$$
	p(\y|\t)=\{2\pi\sigma_0^2\}^{-n}\left[\frac{(2\pi)^n}{|\sigma_0^{-2}\Omega_n(\theta)|}\right]^{1/2}\exp\left\{-\frac{1}{2\sigma_0^2}\left[(\y)'\y-\alpha^2(\y)'\Omega_n(\theta)^{-1}\y\right]\right\},
	$$which yields the observable data log-likelihood
	\begin{flalign*}
	\log p(\y|\t)=-\frac{n}{2}\log2\pi-\frac{n}{2}\log(\sigma_0^2) -\frac{1}{2}\log |\Omega(\theta)|+\frac{1}{2\sigma_{0}^2}\left[\alpha^2 (y_1^n)'\Omega_n(\theta)^{-1}y_1^n-(y_1^n)'y_1^n\right].
	\end{flalign*}

	Following Lemma \ref{lem:first}, consider the infeasible situation where our variational family for $\theta$ is
	$$
	\mathcal{Q}_\theta:=\{q_\theta:\delta_{\theta_0}(t),\;t\in\Theta\}.
	$$Under this choice, consistency follows if 	$
	\Upsilon_n(q)/n=\frac{1}{n}\left\{\log p(\y|\t_0)-\mathcal{L}_n(\t_0)\right\}=o_p(1).
	$ In the remainder, we drop the dependence of $q_x(\x|\t)$ on $\t_0$ and simply denote $q_x(\x)=q_x(\x|\t_0)$.

	Under the choice of $\mathcal{Q}_x$,  
	\begin{flalign*}
	\mathcal{L}_n(\theta_0)=\int_{\mathcal{X}}q_x(\x)\log \frac{p(\x,\y|\theta)}{q_x(\x)}\dt\x=&-n\log2\pi-n\log\sigma_0^2\\&-\frac{1}{2\sigma_0^2}\sum_{k=2}\int (x_k-\rho_0 x_{k-1})^2 q_{x}(x_k,x_{k-1})\dt x_k\dt x_{k-1}\\&- \frac{1}{2\sigma_0^2}\sum_{k=1}\int (y_k-\alpha_0 x_{k})^2 q_{x}(x_{k+1},x_{k})\dt x_{k+1}\dt x_{k}\\&-\frac{1}{2\sigma_0^2}\int (x_1)^2\mathcal{N}(x_1;0,\sigma_0^2)\dt x_1\\&-\int q_x(\x)\log\prod_{k=2}q(x_k|x_{k-1})\dt\x, 
	\end{flalign*}where each of the above individual pieces can be solved explicitly:
	\begin{flalign*}
	&\int q_x(x_k,x_{k-1})\log q(x_k|x_{k-1})\dt x_k \dt x_{k-1}=-\frac{1}{2}\log2\pi-\frac{1}{2}-\frac{1}{2}\log \sigma_0^2(1-\rho_0^2) \\
	&\int (x_k-\rho_0 x_{k-1})^2 q_{x}(x_k,x_{k-1})\dt x_k\dt x_{k-1}=\sigma_{0}^2(1-\rho_0^2)\\
	&\int (x_1)^2\mathcal{N}(x_1;0,\sigma_0^2)\dt x_1=\sigma_0^2\\\
	&\int (y_k-\alpha x_{k})^2q_x(x_k)\dt x_{k}=y_k^2+\alpha^2\sigma_{0}^2,
	\end{flalign*}to obtain
	\begin{flalign*}
	\mathcal{L}_n(\theta_0)=&-\frac{n}{2}\log2\pi-\frac{n}{2}\log\sigma_0^2+\frac{n}{2}+\frac{n}{2}\log(1-\rho_0^2)-\frac{1}{2\sigma_0^2}\{n\sigma^2_0(1-\rho^2_0)\}-\frac{1}{2\sigma_0^2}(\y)'\y-\frac{n}{2}\alpha_0^2
	\end{flalign*}
	Similarly, we have that 
	$$
	\log p(\y|\t_0)=-\frac{n}{2}\log2\pi-\frac{n}{2}\log\sigma_0^2-\frac{1}{2}\log|\Omega_n(\theta_0)|+\frac{\alpha_0^2}{2\sigma_0^2}(\y)'\Omega_n(\theta_0)^{-1}\y-\frac{1}{2\sigma_0^2}(\y)'\y
	$$
	and	
	Jensen's Gap is
	\begin{flalign*}
	\Upsilon_n(q)=&-\frac{1}{2}\log|\Omega(\theta_0)|+\frac{1}{2\sigma_0^2}[\alpha^2_0(\y)'\Omega(\theta_0)^{-1}\y+n\alpha_0^2\sigma^2_0]-\frac{n}{2}-\frac{n}{2}\log(1-\rho^2_0)+\frac{1}{2}\{n(1-\rho_0^2)\}
	\end{flalign*}
	
	To determine whether $\Upsilon_n(q)/n=o_p(1)$, we must first consider the behavior of the first and second terms in $\Upsilon_n(q)$. For the first term, we note that $|\Omega_n(\theta_0)|$ is a deterministic function of $\theta_0$ and, it can be shown that, {see Lemma \ref{lem:mat} for details,} if 
	$
	(1+\alpha^2_0+\rho^2_0)^2-4\rho^2_0\neq 0
	$,
	then, for $a= (1+\alpha^2_0+\rho^2_0)$ and $d:=\sqrt{a_0^2-4\rho_0^2}$,
	$$
	|\Omega_n(\theta_0)|=\frac{1}{{d}}\left(\left(\frac{a+{d}}{2}\right)^{n+1}-\left(\frac{a-{d}}{2}\right)^{n+1}\right), 
	$$and we can define 
	$$
	C_1(\theta_0):=\plim_{n\rightarrow\infty}\log|\Omega_n(\theta_0)|/n .
	$$Conversely, if $(1+\alpha_0^2+\rho_0^2)-4\alpha_0^2=0$, then 
	$$
	|\Omega_n(\theta_0)|=(n+1)(a/2)^{n},
	$$and we can define $C_1(\theta_0)$ similarly in this case. 
	
	Now, consider the second term in $\Upsilon_n(q)$. From the structure of the model, for $\sigma_x^2=\sigma_0^2/(1-\rho_0^2)$, 
	$$
	\y\sim \mathcal{N}\left(0,M \right),\;M:=\sigma^2_x[(\sigma^2_0/\sigma^2_x)I+V],\;V^{-1}:=\begin{pmatrix}
	(1+\rho_0^2)&-\rho_0&0&\dots&0\\-\rho_0&(1+\rho_0^2)&-\rho_0&\dots&0\\\vdots&\vdots&\vdots&\vdots&\vdots\\0&0&0&-\rho_0&(1+\rho^2_0)
	\end{pmatrix},
	$$so that we can conclude that, for any $n\ge2$,  
	$$
	\mathbb{E}[(\y)'\Omega_n^{-1}(\theta_0)\y]=\text{Tr}\left[\Omega_n(\theta_0)^{-1}M\right],\;\text{Var}\left[(\y)'\Omega_n(\theta_0)^{-1}\y\right]=2\text{Tr}\left[\Omega_n(\theta_0)^{-1}M\Omega_n(\theta_0)^{-1}M\right].
	$$
	
	Define $Z_n=(\y)'\Omega_n(\theta_0)^{-1}\y$ and apply Markov's inequality to $Z_n$ to obtain, for any $\epsilon>0$,   
	\begin{flalign}\label{eq:lgm1}
	\text{Pr}\left(|Z_n-\mathbb{E}[Z_n]|>n\epsilon\right)&\leq \text{Var}[Z_n]/(n^2\epsilon^2)=\text{Tr}\left[\Omega_n(\theta_0)^{-1}M\Omega_n(\theta_0)^{-1}M\right]/(n^2\epsilon^2)
	\end{flalign}In addition, for $D_1$ denoting the first diagonal element of $\Omega_n(\theta_0)^{-1}M\Omega_n(\theta_0)^{-1}M$, and $D_n$ the $n$-th, 
	\begin{flalign}\label{eq:lgm2}
	\text{Var}[Z_n]&=\text{Tr}\left[\Omega_n(\theta_0)^{-1}M\Omega_n(\theta_0)^{-1}M\right]\le n \cdot\sup_{n\ge1}\left\{|D_1|,\dots,|D_n|\right\}.
	\end{flalign}Define the sequence $c_n:=\sup_{n\ge1}\left\{|D_1|,\dots,|D_n|\right\}$. For any $0\le\rho_0^2<1$ and $0\le|\alpha_0|<M<\infty$, the sequence $c_n$ is non-random and bounded for each $n$, hence we have that $c_n/n\rightarrow0$. From the boundedness of $c_n$, apply equations \eqref{eq:lgm1} and \eqref{eq:lgm2} to conclude that, for any $\epsilon>0$, 
	$$\lim_{n\rightarrow\infty}\frac{\text{Var}[Z_n]}{(n\epsilon)^2}\le \lim_{n\rightarrow\infty} \frac{\sup_{n\ge1}| \text{diag}\left\{\Omega_n(\theta_0)^{-1}M\Omega_n(\theta_0)^{-1}M\right\}|}{n\epsilon^2}=\lim_{n\rightarrow\infty}\frac{c_n}{n\epsilon^2}\rightarrow0. $$The above argument and equation \eqref{eq:lgm1} allow us to conclude that $C_2(\theta_0):=\frac{\alpha_0^2}{2\sigma_0^2}\lim_{n\rightarrow\infty}\text{Tr}[\Omega_n(\theta_0)M]/n$ exists and that 
	$$
	\plim_{n\rightarrow\infty}\frac{\alpha_0^2}{2\sigma_0^2 n}(\y)'\Omega_n(\theta_0)^{-1}\y=C_2(\theta_0).
	$$	
	
	We are now ready to specialize the above to the two cases of interest. 
	
	\bigskip
	
	\noindent\textbf{Case 1:} If $\rho_0=0$, then $|\Omega_n(\theta_0)|=(1+\alpha_0^2)^{n}$, and $\log|\Omega_n(\theta_0)|=n\log(1+\alpha_0^2)$. In addition, $M=2\sigma^2_0I$ and $\Omega_n(\theta_0)^{-1}=\frac{1}{(1+\alpha_0^2)}I_n$, with $I_n$ the $n$-dimensional identity matrix, so that $\text{Tr}(\Omega_n(\theta_0)^{-1}M)=n\frac{2\sigma^2_0}{1+\alpha_0^2}$. Therefore, we have that $$C_1(\theta_0)=\frac{1}{2}\log(1+\alpha_0^2)\text{ and }C_2(\theta_0)=\alpha^2_0/(1+\alpha_0^2).$$ Since $n^{-1}\log p_{\theta_0}(\y)\rightarrow_p H(\theta_0)\ge0$, which minimizes entropy, and since $\Upsilon_n(q)\ge0$, consequently, VI for $\alpha_0$ will be consistent iff  
	$$
	\plim_{n\rightarrow\infty}\Upsilon_n(q)/n=0=-\frac{1}{2}\log(1+\alpha_0^2)+\frac{\alpha_0^2}{1+\alpha_0^2}+\alpha_0^2.
	$$Over $0\le|\alpha_0|<M$, $M$ finite, the above equation has the unique solution $\alpha_0=0$. 
	
	\bigskip 
	
	%\noindent\textbf{Case 2: $\alpha_0=0$.} We first show that if $\rho_0=0$, then VI is consistent. We prove the other case by contradiction. 
	
	%\medskip 
	
	%\noindent\textbf{Case 2: $\alpha_0=0$ and $\rho_0=0$.} 
	%Under $\alpha_0=\rho_0=0$, we that $|\Omega_n(\theta_0)|=(n+1)(1/2)^n$, and $\log(|\Omega_n(\theta_0)|)/n\rightarrow0$.  In addition, $\text{TR}(\Omega_n(\theta_0)^{-1}M)=n2\sigma_0^2$ so that 
	%$$
	%C_1(\theta_0)=0=C_2(\theta_0), \text{and} \Upsilon_n(q)/n=o_p(1).
	%$$Conclude that VI for $\rho$ is consistent .
	
	\medskip 
	\noindent\textbf{Case 2: $\alpha_0=0$.} Similar to the above, since $H(\theta_0)$ is entropy minimizing, and since $\Upsilon_n(q)/n\ge0$, it must be that $\Upsilon_n(q)/n=o_p(1)$ if VI is to be consistent. However, if $\alpha_0=0$, we have that 
	$$
	\Upsilon_n(q)=-\frac{1}{2}\log |\Omega_n(\theta_0)|-\frac{n}{2}-\frac{n}{2}\log(1-\rho_0^2)+\frac{n}{2}(1-\rho_0^2).
	$$Apply Lemma \ref{lem:mat} in the supplementary material to obtain $|\Omega_n(\theta_0)|$ with $a=(1+\rho_0^2)$, and $b=c=-\rho_0$. In particular, use the fact, for $0\le|\rho_0|<1$, $d=\sqrt{a^2-4bc}=1-\rho_0^2$, and note that $a+d=2$ and $a-d=2\rho^2_0$, which allows us to specialize the general result in Lemma \ref{lem:mat} as 
	\begin{flalign*}
	|\Omega_n(\theta_0)|&=\frac{1}{(1-\rho_0^2)}\left\{\left[1-\left(\rho^{2}_0\right)^n\right]-\rho^2_0\left[1-\left(\rho^2_0\right)^{n-1}\right]\right\}=\frac{1}{1-\rho^2_0}\left\{1-\rho^2 -\rho_0^{2n}+\rho^2(\rho^{2(n-1)})\right\}\\&=1
	\end{flalign*}
	Conclude that  VI consistent is iff
	$$
	\plim_{n\rightarrow\infty}\Upsilon_n(q)/n=0=-\frac{1}{2}\log(1+\rho_0^2)-\frac{1}{2}-\frac{1}{2}\log(1-\rho_0^2)+\frac{1}{2}(1-\rho_0^2).
	$$The only solution to the above equation is $\rho_0=0$.

\end{proof}

\begin{proof}[Proof of Lemma \ref{lem:three}]
	Recall the complete data likelihood from the proof of Lemma \ref{lem:lgm}:
	\begin{flalign*} p(x_1^n, y_1^n| \theta)=& \{2 \pi \sigma_{0}^{2}\}^{-n} \exp \left\{-\frac{1}{2 \sigma_{0}^{2}} \sum_{k=1}^{n-1}\left(x_{k+1}-\rho_0x_{k}\right)^{2}-\frac{1}{2 \sigma_{0}^{2}} \sum_{k=1}^{n}\left(y_{k}-\alpha x_{k}\right)^{2}-\frac{1}{2 \sigma_{0}^{2}}\left(x_{1}\right)^{2}\right\}\\&=\{2\pi\sigma^2_0\}^{-n}\exp\left\{-\frac{1}{2\sigma_0^2}\left[(\x)'\Omega_n(\theta)\x-2\alpha(\y)'\x+(\y)'\y\right]\right\}. 
	\end{flalign*}	
	In this case, we  calculate 
	\begin{flalign*}
	\mathcal{L}_n(\theta,\lambda)&=\int_{\mathcal{X}}q_\lambda(\x)\log\frac{p(x_1^n, y_1^n| \theta)}{q_\lambda(\x)}\dt\x\\&=\int_{\mathcal{X}}q_\lambda(\x)\log p(x_1^n, y_1^n| \theta)\dt\x-\int_{\mathcal{X}}q_\lambda(\x)\log q_\lambda(\x)\dt\x.
	\end{flalign*}Writing the two terms as $\mathcal{L}_{1,n}(\theta,\lambda)$ and $\mathcal{L}_{2,n}(\theta,\lambda)$, let us focus on the first term. This can be rewritten as
	\begin{flalign*}
	\mathcal{L}_{1,n}(\theta,\lambda)&=-n\log 2\pi-\frac{1}{2}\E_{\x}[(\x)'\Omega_n(\theta)\x]-\alpha(\y)'\E_{\x}[\x]-\frac{1}{2}(\y)'\y\\&=-n\log 2\pi-\frac{1}{2}(\y)'\y-\frac{1}{2}\text{Tr}[\Omega_n(\theta)\nu(\lambda)\Phi_n(\lambda)]\\&=-n\log 2\pi-\frac{1}{2}(\y)'\y-\frac{\nu(\lambda)}{2}\text{Tr}[\Omega_n(\theta)\Phi_n(\lambda)],
	\end{flalign*}where the second equation comes from the fact that $\E_{\x}[\x]=0$, under $q_{\lambda}(\x)$, and properties of quadratic forms, and the third follows from linearity of $text{Tr}(\cdot)$. Tedious algebraic calculations show that  $$\text{Tr}[\Omega_n(\theta)\Phi_n(\lambda)]=(n-1)(1+\alpha^2+\rho^2-\rho\lambda)+(1+\alpha^2-\rho\lambda)$$and so we obtain 
	\begin{flalign*}
	\mathcal{L}_{1,n}(\theta,\lambda)&=-n\log 2\pi-\frac{1}{2}(\y)'\y-\frac{(n-1)\nu(\lambda)}{2}(1+\alpha^2+\rho^2-\rho\lambda)-\frac{\nu(\lambda)}{2}(1+\alpha^2-\rho\lambda)
	\end{flalign*}
	
	The second term can be written as 
	\begin{flalign*}
	\mathcal{L}_{2,n}(\theta,\lambda)&=\int_{\mathcal{X}}q_\lambda(\x)\log q_\lambda(\x)\dt\x=\E_{\x}[\log q_\lambda(\x)]=-\frac{n}{2}\log 2\pi-\frac{1}{2}\log |\nu(\lambda)\Phi_n(\lambda)|-n/2\\&=-\frac{n}{2}\log 2\pi-\frac{n}{2}-\frac{1}{2}\log\frac{\sigma^2_0}{(1-\lambda^2_\rho)}\\&=-\frac{n}{2}\log 2\pi-\frac{n}{2}-\frac{n}{2}\log {\sigma^2_0}+\frac{1}{2}\log{(1-\lambda^2_\rho)}
	\end{flalign*}where we have used the fact that for the matrix $\Phi_n(\lambda)$, $|\Phi_n(\lambda)|=(1-\lambda^2)^{n-1}$, so that we have 
	$$
	|\nu(\lambda)\Phi_n(\lambda)|=\nu(\lambda)^n|\Phi_n(\lambda)|=\frac{\lambda_\sigma^n}{(1-\lambda^2_\rho)^n}(1-\lambda^2)^{n-1}.
	$$

	Dividing these terms by $n$, and taking $n\rightarrow\infty$ yields the following limit criterion
	\begin{flalign*}
	\mathcal{L}(\theta,\lambda) = -\log2\pi-\frac{1}{2}\E_{\theta_0}\left[(\y)'(\y)\right]-\frac{\nu(\lambda)}{2}(1+\alpha^2+\rho^2-\rho\lambda). 
	\end{flalign*}Differentiating $\mathcal{L}(\theta,\lambda)$ with respect to $\lambda$ and solving yields two solutions:
	\begin{flalign}
	\lambda(\theta)=\begin{cases}\frac{\alpha^2+\rho^2-\sqrt{\left(\alpha^2+\rho^2+\rho+1\right)\,\left(\alpha^2+\rho^2-\rho+1\right)}+1}{\rho}\\ \frac{\alpha^2+\rho^2+\sqrt{\left(\alpha^2+\rho^2+\rho+1\right)\,\left(\alpha^2+\rho^2-\rho+1\right)}+1}{\rho} \end{cases}
	\end{flalign}

	It can be shown that the first solution is the maximum, while the second is the minimum. Using the function $\lambda(\theta)$, the concentrated objective function is 
	\begin{flalign}
	\mathcal{L}[\theta,\lambda(\theta)]=\frac{\sqrt{\left(\alpha^2+\rho^2+\rho+1\right)\,\left(\alpha^2+\rho^2-\rho+1\right)}}{2\,\left(\frac{{\left(\alpha^2+\rho^2-\sqrt{\left(\alpha^2+\rho^2+\rho+1\right)\,\left(\alpha^2+\rho^2-\rho+1\right)}+1\right)}^2}{\rho^2}-1\right)}.
	\end{flalign}
	On the compact space $[\underline{\alpha},\overline{\alpha}]\times [\underline{\rho},\overline{\rho}]$, the function $ \mathcal{L}[\theta,\lambda(\theta)]$ is continuous and bounded. Hence, by the extreme value theorem $ \mathcal{L}[\theta,\lambda(\theta)]$ achieves its maximum at some point in $[\underline{\alpha},\overline{\alpha}]\times [\underline{\rho},\overline{\rho}]$.

	To prove that $\theta^\star=(0,\underline\rho)'$, we can consider the following two cases:
	\begin{enumerate}
		\item $\alpha^\star\ge 0$, and $\rho^\star= \underline{\rho}$;
		\item $\rho^\star\ge \underline{\rho}$, and $\alpha^\star= 0$;
		%	\item $\rho^\star> \underline{\rho}$, and $\alpha^\star> \underline{\alpha}$;
	\end{enumerate}
	
	\noindent\textbf{Case 1:} Take $\rho^\star=\underline{\rho}>0$, but $\underline{\rho}$ close to zero. Under this choice, we can approximate $ \mathcal{L}[\theta,\lambda(\theta)]$ 
	as
	\begin{flalign}
	\mathcal{L}[\theta,\lambda(\theta)]|_{\rho=\rho^\star}\approx\frac{\sqrt{\left(\alpha^2+1\right)\,\left(\alpha^2+1\right)}}{2\,\left(\frac{{\left(1+\alpha^2-\sqrt{\left(\alpha^2+1\right)\,\left(\alpha^2+1\right)}\right)}^2}{\underline{\rho}^2}-1\right)}=-\frac{1}{2}{(1+\alpha^2)}{}.
	\end{flalign}The RHS of the above is a decreasing function of $\alpha$, so that its maximum over $[0,\overline\alpha]$ is attained at $\alpha=0$. 
	
	\bigskip
	
	\noindent\textbf{Case 2:} Taking $\alpha^\star=0$, we have 
	\begin{flalign}\label{eq:rhofun}
	\mathcal{L}[\theta,\lambda(\theta)]|_{\alpha=0}= \frac{\sqrt{\left(\rho^2+\rho+1\right)\,\left(\rho^2-\rho+1\right)}}{\frac{2\,{\left(\rho^2-\sqrt{\left(\rho^2+\rho+1\right)\,\left(\rho^2-\rho+1\right)}+1\right)}^2}{\rho^2}-2}=\frac{1}{2}\frac{\sqrt{\rho^4+\rho^2+1}}{\left[\frac{(1+\rho^2-\sqrt{\rho^4+\rho^2+1})^2}{\rho^2}-1\right]}.
	\end{flalign}For $0<\underline\rho\le\rho\le\overline\rho<1$, the denominator of the RHS is always larger than $-1$, and always less than $(2-\sqrt{3})^2-1\approx-.9282$. It can verified (e.g., numerically) that the above function is monotonically decreasing over $[\underline\rho,\overline\rho]$, so that its maximum  is attained at $\rho=\underline\rho$.
	
	\medskip
	
	Hence, the maximum of $\mathcal{L}[\theta,\lambda(\theta)]$ over $[0,\overline{\alpha}]\times [\underline{\rho},\overline{\rho}]$ is given by $\theta^\star=(0,\underline\rho)'$. 
	
\end{proof}

\begin{proof}[Proof of Corollary \ref{cor:one}]The result is a direct consequence of Lemma \ref{lem:three} and known results. For two multivariate normal distributions (with the same dimension $d\ge1$) $N(\mu_1,\Sigma_1)$ and $N(\mu_2,\Sigma_2)$, the KL divergence is
	\begin{equation}\label{eq:kldiv}
	\text{KL}\left[N(\mu_1,\Sigma_1)||N(\mu_2,\Sigma_2)\right]	=\frac{1}{2}\left[\log \frac{\left|\Sigma_{2}\right|}{\left|\Sigma_{1}\right|}-d+\operatorname{tr}\left\{\Sigma_{2}^{-1} \Sigma_{1}\right\}+\left(\mu_{2}-\mu_{1}\right)^{T} \Sigma_{2}^{-1}\left(\mu_{2}-\mu_{1}\right)\right]
	\end{equation} For known $\theta$, the posterior of the latent states can be obtained via the Kalman filter, and has the known form:
	$$
	\pi(x_n|\y,\theta_0)=N\{x_n; {x}_{n|n}(\theta_0),P_{n|n}(\theta_0)\},
	$$where ${x}_{n|n}$ and $P_{n|n}$ are known functions that are obtained from the Kalman filter recursions. Let $\lambda_\star=\lambda(\theta_\star)$, and note that since $q_{\lambda_\star}(\x)$ is multivariate Gaussian with mean $0$ and variance $\Phi_n(\lambda_\star)$, we immediately obtain that the marginal density of $x_n$ under the variational family is $q_{\lambda_\star}(x_n)=N\{x_n;0,\sigma^2_0/(1-\lambda_\star^2)\}$. Applying equation \eqref{eq:kldiv} then yields
	\begin{flalign*}
	\text{KL}\left[\pi(x_n|\y,\theta_0)||q_{\lambda_\star}(x_n)\right]	&=\text{KL}\left[N\{ {x}_{n|n}(\theta_0),P_{n|n}(\theta_0)\}||N\{0,\sigma^2_0/(1-\lambda_\star^2)\}\right]\\&=\frac{1}{2}\left[\log\frac{\sigma^2_0/(1-\lambda_\star^2)}{P_{n|n}(\theta_0)}-1+\frac{P_{n|n}(\theta_0)}{\sigma^2_0/(1-\lambda_\star^2)}+\frac{ {x}^2_{n|n}(\theta_0)}{\sigma^2_0/(1-\lambda_\star^2)}\right]\\&=\frac{1}{2}\left[-\log ([1-\lambda_\star^2]{P_{n|n}(\theta_0)})-1+[1-\lambda_\star^2]\left[{P_{n|n}(\theta_0)}+ {x}^2_{n|n}\right]\right]\\&=\frac{1}{2}\left[ \log \frac{\exp\{[1-\lambda_\star^2]P_{n|n}(\theta_0)\}}{[1-\lambda_\star^2]P_{n|n}(\theta_0)}-1+[1-\lambda_\star^2]x^2_{n|n}(\theta_0)\right]
	\end{flalign*}where the third inequality follows from the fact that $\sigma^2_0=1$, and the last from re-arranging terms. 
	
	For any $y\ge C>0$, differentiating the function $\exp(y)/y$ and solving for its zero yields 
	$$
	\frac{\exp(y)}{y}[1-({1}/{y})]=0\iff [1-({1}/{y})]=0
	$$which yields the unique solution $y=1$ on $y\ge C>0$. A second round of differentiation shows that this function is positive at $y=1$. Hence, $\exp(y)/y$ attains a unique minimum at $y=1$, and we have $\exp(y)/y\ge\exp(1)$ for all $y\ge C>0$. Consequently, $\log \{\exp(y)/y\}\ge1$ and we have shown that the first term in the KL divergence is positive when $P_{n|n}(\theta_0)>0$. Since $[1-\lambda_\star^2]x_{n|n}^2(\theta_0)\ge0$, the stated result follows.

\end{proof}
\subsection{Additional lemmas}

\begin{lemma}
	\label{lem:mat} Let 
	\begin{equation*}
	\Omega_n:= 
	\begin{pmatrix}
	a & c & 0 & \dots & 0 \\ 
	b & a & c & \dots & 0 \\ 
	\vdots & \vdots & \vdots & \vdots & \vdots \\ 
	0 & 0 & 0 & b & 1%
	\end{pmatrix}
	,\;a>0,\;a^2-4bc\ne0.
	\end{equation*}
	Then, for $d=\sqrt{a^2-4bc}$, 
	\begin{equation*}
	|\Omega_n|=\frac{1}{d}\left[\left(\frac{a+d}{2}\right)^{n}-\left(\frac{a-d}{
		2 }\right)^{n}\right]-bc\frac{1}{d}\left[\left(\frac{a+d}{2}
	\right)^{n-1}-\left(\frac{a-d}{2}\right)^{n-1}\right].
	\end{equation*}
\end{lemma}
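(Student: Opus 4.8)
The plan is to reduce the determinant to a second-order linear recurrence and then solve that recurrence in closed form. Write $D_n := |\Omega_n|$. The crucial device is to expand along the \emph{first} row rather than the last, since the anomalous diagonal entry $1$ sits in the bottom-right corner and a first-row expansion leaves it untouched, so the resulting minors are matrices of exactly the same shape as $\Omega_n$. Concretely, the first row is $(a,c,0,\dots,0)$, and cofactor expansion gives $D_n = a\,M_{1,1} - c\,M_{1,2}$. The minor $M_{1,1}$ (delete row $1$, column $1$) is precisely $\Omega_{n-1}$, so $M_{1,1}=D_{n-1}$. For $M_{1,2}$ (delete row $1$, column $2$) the first surviving column has a single nonzero entry $b$ in its top position, so a further expansion along that column yields $M_{1,2}=b\,D_{n-2}$, the remaining block being $\Omega_{n-2}$. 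Combining these, for all $n\ge 2$,
\begin{equation*}
D_n = a\,D_{n-1} - bc\,D_{n-2},
\end{equation*}
with the easily-checked anchors $D_1 = 1$ and $D_2 = a - bc$.

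The second step is routine: the characteristic equation $x^2 - ax + bc = 0$ has roots $r_{\pm} = \tfrac{a\pm d}{2}$ with $d=\sqrt{a^2-4bc}$, and these are distinct precisely because the hypothesis $a^2 - 4bc\ne 0$ guarantees $d\ne 0$. Introduce the fundamental solution
\begin{equation*}
E_n := \frac{1}{d}\!\left[\left(\frac{a+d}{2}\right)^{n} - \left(\frac{a-d}{2}\right)^{n}\right],
\end{equation*}
which satisfies the same recurrence with $E_0 = 0$ and $E_1 = 1$ (using $r_+ - r_- = d$). I then claim $D_n = E_n - bc\,E_{n-1}$. Since this is a linear combination of shifted solutions it automatically solves the recurrence, so it suffices to verify the two initial conditions: at $n=1$, $E_1 - bc\,E_0 = 1 = D_1$; and at $n=2$, using $E_2 = (r_+^2 - r_-^2)/d = (r_+ + r_-) = a$, one gets $E_2 - bc\,E_1 = a - bc = D_2$. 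By uniqueness of the solution of a second-order linear recurrence pinned by two initial values, the identity holds for every $n$, and $E_n - bc\,E_{n-1}$ is verbatim the stated formula.

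I expect the only genuine obstacle to be the bookkeeping in the first paragraph — keeping the cofactor signs correct and confirming that the recurrence truly \emph{closes} (i.e.\ that the minors are again of the form $\Omega_{n-1},\Omega_{n-2}$ with the lone $1$ preserved in the corner), together with handling the small-$n$ boundary so that the recurrence is valid down to $n=2$. The choice to expand along the first row is what makes this transparent; a last-row expansion would instead force a comparison with the fully uniform tridiagonal determinant and introduce an extra reduction step. Everything after the recurrence is standard linear-difference-equation algebra, and the nondegeneracy $d\ne 0$ is exactly what is needed to write the solution in the displayed form.
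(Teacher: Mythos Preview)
Your argument is correct. You and the paper both reduce to the same second-order recurrence with characteristic roots $(a\pm d)/2$, but you organize the reduction differently. The paper expands along the \emph{last} row, which gives $|\Omega_n| = 1\cdot f_{n-1} - bc\,f_{n-2}$ where $f_{n-1},f_{n-2}$ are the determinants of the \emph{uniform} tridiagonal matrices (all diagonal entries equal to $a$); it then solves the recurrence $f_k = a f_{k-1} - bc f_{k-2}$ for those and plugs back in. You instead expand along the first row, so the minors retain the anomalous $1$ in the corner and the recurrence $D_n = aD_{n-1} - bcD_{n-2}$ runs directly on the sequence you want, with the $1$ encoded in the initial data $D_1=1$, $D_2=a-bc$. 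Your route avoids the detour through a second family of determinants and is the cleaner of the two; you even anticipated the paper's alternative in your closing remark. The only cosmetic point is that your $E_n$ is exactly the paper's $f_{n-1}$ (their closed form carries an exponent $k+1$), so the two displayed answers are identical.
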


\begin{proof}
	The determinant of tridiagonal matrices satisfy the following recurrence relationship: for $f_k=|\Omega_{k}|$, with $\Omega_{k}$ denoting the $k\times k$ matrix, $1< k\le n$,  
	$$f_n=a_nf_{n-1}-c_{n-1}b_{n-1}f_{n-2},$$ where $f_0=0$ and $f_1=1$, and $c_{k},b_{k}$ refer to the elements above and below, respectively, the diagonal term $a_n$. In this case, this relationship implies that $f_n=|\Omega_n|$ satisfies
	$$
	f_n=f_{n-1}-cb f_{n-2}.
	$$

	However, note that, for an $1\le k<n$, $f_{k}$ is actually a $k\times k$ dimensional Toeplitz matrix. Applying a Laplace expansion to $f_k$ twice yields the linear homogenous recurrence equation
	$$
	f_{k}=a f_{k-1} -bc f_{k-2},
	$$ 	which has characteristic polynomial $p(x)=x^2-ax+bc$ that admits two solutions
	$$
	x=\frac{a\pm\sqrt{a^2-4bc}}{2}
	$$ Under the condition that ${a^2-4bc}\ne0$, the roots are distinct and we have that 
	$$
	f_k = c_1 \left(\frac{a+\sqrt{a^2-4bc}}{2}\right)^{k}+c_2\left(\frac{a-\sqrt{a^2-4bc}}{2}\right)^k,$$
	for some $c_1$ and $c_2$ that satisfy the initial conditions of the recurrent relation. In particular, we have that $f_1=a$, and $f_2=a^2-bc$, so that 
	$$
	a^2-bc=a(f_1)-bc(f_0)=a^2-bc(f_0),
	$$which implies that $d_0=1$. Consequently, $c_1+c_2=1$. Letting $d=\sqrt{a^2-4bc}$, we see that the case of $k=1$ implies
	\begin{flalign*}
	2a=k_1 (a+d)+k_2(a-d)=a+(c_1-c_2)d&\implies c_1=c_2+a/d
	\\&\implies 1=2c_2+a/d\\&\implies c_2=\frac{d-a}{2d}=-\frac{1}{d}\frac{(a-d)}{2}\\&\implies c_1 =\frac{d-a+2a}{2d}=\frac{1}{d}\frac{a+d}{2}
	\end{flalign*}
	
	Therefore, we can conclude that
	$$
	f_k=\frac{1}{d}\left[\left(\frac{a+d}{2}\right)^{k+1}-\left(\frac{a-d}{2}\right)^{k+1}\right],
	$$and we then have closed form expressions for the determinants $f_{n-1}$ and $f_{n-2}$. Plugging in these definitions
	\begin{flalign*}
	f_n &= f_{n-1}-bcf_{n-2}\\&=\frac{1}{d}\left[\left(\frac{a+d}{2}\right)^{n}-\left(\frac{a-d}{2}\right)^{n}\right]-bc\frac{1}{d}\left[\left(\frac{a+d}{2}\right)^{n-1}-\left(\frac{a-d}{2}\right)^{n-1}\right].
	\end{flalign*}
\end{proof}

\end{document}